\newcommand\LONGCOMMENT[1]{%
  \textbf{\#}\ \begin{minipage}[t]{0.8\linewidth}#1\strut\end{minipage}%
}
\newcommand{\D}{\mbox{done}}
\newcommand{\HandleVertex}{\textsc{HandleVertex}}
\newcommand{\HandleEdge}{\textsc{HandleEdge}}
\begin{document}

\title{Graph Exploration by Energy-Sharing Mobile Agents\thanks{This is the full version of the paper appearing in the proceedings of SIROCCO 2021.}}
\author{
J. Czyzowicz\inst{1}
\and
S. Dobrev\inst{2}
\and
R. Killick\inst{3}
\and
E. Kranakis\inst{3}
\and
D. Krizanc\inst{4}
\and 
L. Narayanan\inst{5}
\and 
J. Opatrny\inst{5}
\and
D. Pankratov\inst{5}
\and 
S. Shende\inst{6}
}
\institute{D\'epartement d'Informatique, Universit\'e du Qu\'ebec en Outaouais,
Canada
\and
Slovak Academy of Sciences, Bratislava, Slovakia
\and
School of Computer Science, Carleton University, Ottawa, Canada
\and
Dept.of Mathematics \& Computer Science, Wesleyan University, Middletown 
CT, USA
\and
Department of Computer Science  and Software Engineering, Concordia University, Canada
\and
Department of Computer Science, Rutgers University, USA
}
\maketitle

%\begin{abstract}
{\bf Abstract} 
We consider the problem of collective exploration of a known $n$-node edge-weighted graph by $k$ mobile agents that have limited energy but are capable of energy transfers. The agents are initially placed at an arbitrary  subset of nodes in the graph, and each agent has an initial, possibly different, amount of energy. The goal of the exploration problem is for every edge in the graph to be traversed by at least one agent. The amount of energy used by an agent to travel distance $x$ is proportional to $x$. In our model, the agents can {\em share} energy when co-located:  when two agents meet, one can transfer part of its energy to the other.

For an $n$-node path, we give an
$O(n+k)$ time algorithm that either finds an exploration strategy, or reports that one does not exist. 
%$O(n + \max \{ k, \ell \} k^2)$ 
For an $n$-node tree with $\ell $ leaves, we give an %$O(n+ \max \{ k, \ell \} k^2)$ 
$O(n+ \ell k^2)$ algorithm that finds an exploration strategy if one exists.  Finally, for the general graph case, we show that the problem of deciding if  exploration is possible by energy-sharing agents is NP-hard,  even for 3-regular graphs. In addition, we show that it is always possible to find an exploration strategy if the total energy of the agents is at least twice the total weight of the edges; moreover, this is asymptotically optimal.

%and $f$ is the number of leaves in the tree.

\vspace{0.5cm}
\noindent
{\bf Key words and phrases.}
Energy,
Exploration,
Graph,
Mobile Agent,
Path,
Sharing,
Tree.

%\end{abstract}

%\newpage

\section{Introduction}

The emergence of swarm robotics has inspired a number of investigations into the capabilities of a collection of autonomous mobile  robots (or agents), each with limited capabilities. Such agents cooperate and work collaboratively  to achieve complex tasks such as pattern formation, object clustering and assembly, search, and exploration. Collaboration on such tasks is achieved by, for example, decomposing the task at hand into smaller tasks which can be performed  by individual agents. The benefits of the collaborative paradigm are manifold: smaller task completion time, fault tolerance, and the lower build cost and energy-efficiency of a collection of smaller agents as compared to larger more complex agents. Somewhat surprisingly, for example,  a recent paper \cite{jurek2019groupsearch} shows that two agents can search for a target at an unknown location on the line with lower total energy costs than a single agent.

In this paper, we study the problem of collective exploration of a known edge-weighted graph by $n$ mobile agents initially placed at arbitrary nodes of the graph. Many variants of the graph exploration problem have been studied previously; see Section~\ref{sec:related-work} for a description of some of the related work. For our work, the goal of exploration is that  {\em every edge} of the graph must be traversed by at least one agent. %As in~\cite{???}, \hl{EK: citation missing} 
The weight of an edge is called its {\em length}. Every agent is equipped with a  {\em battery/energy container} that has an initial amount of energy; the initial energies of different agents can be different. We assume that moving length $x$  depletes the battery of an agent by exactly $x$.

Clearly then, for exploration to be possible, the sum of the initial energies of all agents has to be at least $\cal{E}$, the sum of all edge weights. However total energy $\cal{E}$ may not be sufficient; the {\em initial placement} of the agents plays a role in deciding if exploration is possible with the given energies.  To see this, consider exploration by 2 agents of a path with 4 nodes, where each of the 3 edges has length 1. If the agents are initially placed at the two endpoints of the path, then total energy $3$ suffices to explore the path. However if the two agents are initially placed at the middle two nodes  of the path, it is not  difficult to see that total energy 4 is necessary to complete the exploration.

In addition to initial placement of agents, and the total amount of energy, the initial {\em energy distribution} also affects the existence of an exploration strategy. To see this,  suppose the 2 agents are placed at the middle nodes of the 4-node path. Consider first an energy distribution in which  both agents have initial energy 2. Then one exploration strategy would be for  both agents to explore half of the center edge, and turn around to travel to the endpoint.
%Alternatively, both agents can simply move in the direction towards the other agent, and keep moving until reaching the endpoint.
Next consider an energy distribution in which agent 1 has energy $3+\epsilon$ for some $0 < \epsilon \leq 1$ and the agent 2 has energy $1-\epsilon$. It is easy to see that exploration is impossible, even though the total energy of both agents is the same as in the first distribution. 

Recently, several researchers have proposed a new mechanism to aid collaboration:  {\em the capability to share energy}.  In other words, when two agents meet, one can transfer a portion of its energy to the other. It is interesting to investigate what tasks might be made possible with this new capability, given the same initial amounts of energies. In \cite{bampas2017collaborative,czyzowicz2016communication,czyzowicz2018broadcast,czyzowicz2019energy,moussi2018data}, researchers have studied the problems of data delivery, broadcast, and convergecast by energy-sharing mobile agents.

In the example described above, where agent 1 has energy $3+\epsilon$ for some $0 < \epsilon \leq 1$ and the agent 2 has energy $1-\epsilon$,  if energy transfer is allowed, agent 1 (with the higher energy) can first go to the endpoint closer to its initial position, then turn around, reach agent 2, and transfer its remaining energy $\epsilon$ to agent 2. This enables agent 2 to reach the other endpoint, thereby completing the exploration. 

This simple example shows that energy-sharing capabilities make graph exploration possible  in situations where it would have been impossible otherwise. Note that an algorithm for exploration with energy sharing requires not only an assignment of trajectories to agents that collectively explore the entire graph, but also an achievable schedule of energy transfers. In  this paper, we are interested in exploration strategies for edge-weighted graphs by energy-sharing mobile agents. We give a precise definition of our model and the collaborative exploration problem below.

\subsection{Model}

We are given a {\em weighted graph} $G= (V, E)$ where $V$ is a set of $n$
 vertices (or nodes), $E$ a set of $m$ edges, and each edge $a_i \in E$ is assigned a real number  $w_i$, denoting its {\em length}.
We have  $k$ {\em mobile agents} (or robots) $r_1, r_2, \ldots, r_{k}$ placed at some of the vertices of the graph. We allow more than one agent to be located in the same place.  Each mobile agent (or agent for short) $r_i$ can move with speed 1, and  initially possesses a specific amount of {\em energy} equal to $e_i$ for its moves.
An agent can move in any direction along the edges of the graph $G$, it can stop
if needed, and it can reverse its direction of moving either at a vertex, or after traversing a part of an edge. The energy consumed by a moving agent is linearly proportional to the distance $x$ traveled; to simplify notation it is assumed to be equal to $x$.
An agent can move only if its energy is greater than zero.

An important feature of our model is the possibility of {\em energy sharing} between agents: when two agents, say $r_i$ and $r_j$, $i\neq j$, meet at some time at some location in the graph, agent $r_i$ can transfer  a portion of its energy to $r_j$.
More specifically, if $e_i'$ and $e_j'$ are the energy levels of  $r_i$ and $r_j$
at the time they meet
then  $r_i$ can transfer to $r_j$ energy $0< e\leq e_i'$ and thus their energies will become  $e_i' -e$ and  $e_j'+e$, respectively.

In our model, each agent is assigned a {\em trajectory} to follow.
%Without going into unnecessary details
We define a {trajectory} of an agent to
be a sequence of edges or parts of edges that starts at the agent's initial position and forms a continuous walk in the graph. In addition, a trajectory specifies
a {\em schedule} of energy transfers, i.e., all
points on this walk (could be  points different from  vertices) where the agent is to receive/transfer energy from/to other agents, and for each such point the amounts of energy involved.
We call a set of trajectories {\em valid} if the schedules of energy transfers among trajectories match, and energy levels are sufficient for the movement of agents. More specifically, for every transfer point on a trajectory of agent $r_i$ where energy  is to be received/transferred,
 there is exactly one  agent $r_j$, $j \neq i$, whose trajectory contains the same transfer point transferring/receiving
 that amount of energy to/from $r_i$, and the transfers can scheduled on a time
line. Furthermore, the energy of an agent, initially and after any energy transfer,  must be  always sufficient to continue to move along its assigned trajectory. We are interested in solving the following general problem of collaborative exploration:
 
\vspace*{2mm}
\noindent
%\begin{quote}
{\bf Graph Exploration Problem}: Given a {\em weighted graph} $G= (V, E)$ and $k$ mobile agents $r_1,$ $r_2, \ldots ,r_{k}$ together with their respective initial energies $ e_1,e_2,\ldots, e_{k}$ and  positions $s_1,s_2,\ldots, s_k$ in the graph, find a valid set of trajectories that {\em explore} (or cover) all edges of the graph.
%\end{quote}

\subsection{Related work}
\label{sec:related-work}

The problems of exploration and searching have been investigated for over fifty years. The studied environments were usually graphs (e.g.\cite{AH00,deng1999exploring,koutsoupias1996searching,panaite1999exploring}) and geometric two-dimensional terrains (e.g.\cite{albers2002exploring,baezayates1993searching,deng1991learn}). The goal of such research was most often the minimization of the time of the search/exploration that was proportional to the distance travelled by the searcher. The task of searching consists of finding the target placed at an unknown position of the environment. The environment itself was sometimes known in advance (cf. \cite{baezayates1993searching,beck1964linear,czyzowicz2019energy,koutsoupias1996searching}) but most research assumed only its partial knowledge, e.g. the type of graph, the upper bound on its size or its node degree, etc. Remarkably, there exist hundreds of papers for search in an environment as simple as a line (cf. \cite{alpern2003theory}). The task of exploration consisted of constructing a traversal of the entire environment, e.g. in order to construct its map (see \cite{kleinberg1994line,panaite1999exploring}). It is worth noting that performing a complete graph traversal does not result in acquiring the knowledge of the map (see \cite{chalopin2010constructing}).

Most of the early research on search and exploration has been done for the case of a single searcher. When a team of collaborating searchers (also called agents or robots) is available, the main challenge is usually to partition the task among the team members and synchronize their efforts using available means of communication, cf. \cite{baeza1995parallel,burgard2005coordinated,dereniowski2015fast,fraigniaud2006collective}. Unfortunately, for the centralized setting, already in the case of two robots in the tree environment known in advance, minimizing its exploration time is NP-hard, e.g., see \cite{fraigniaud2006collective}.

The case of robots that can share energy has been recently studied for the tasks of data communication, \cite{bampas2017collaborative,czyzowicz2016communication,czyzowicz2018broadcast,czyzowicz2019energy,moussi2018data}. In this research the robots are distributed in different places of the network, each robot initially possessing some amount of energy, possibly distinct for different robots. The energy is used proportionally to the distance travelled by the robot. The simplest communication task is {\em data delivery} (see \cite{bampas2017collaborative,bartschi2017efficient,chalopin2014data,czyzowicz2016communication}), where the data packet originally placed in some initial position in the environment has to be carried by the collaborating robots into the target place. Remarkably, when the robots cannot share energy, data delivery is an NP-hard problem even for the line network, (see \cite{chalopin2014data}). When the robots are allowed to exchange a portion of energy while they meet in the tree of $n$ nodes, \cite{czyzowicz2016communication} gives the $O(n)$-time solution for the data delivery. For energy sharing robots, the authors of \cite{czyzowicz2018broadcast} study the {\em broadcast problem}, where a single packet of data has to be carried to all nodes of the tree network, while \cite{czyzowicz2016communication} investigates also the {\em convergecast problem}, where the data from all tree nodes need to be accumulated in the memory of the same robot. In both cases efficient communication algorithms are proposed. A byproduct of \cite{czyzowicz2019energy} is an optimal exploration algorithm in the special case when all robots are initially positioned at the same node of the tree. When the energy sharing robots have small limited memory, able to carry only one or two data packets at a time, the simplest case of the data delivery problem is shown to be NP-hard in \cite{bampas2017collaborative}. Further, in~\cite{dynia2007robots} bounds are proved in an energy model where robots can communicate when they are in the same node and the goal of robot team is to jointly explore an unknown tree.

\subsection{Results of the paper}

We start in Section~\ref{sec:Exploring a Path} with exploration of a path.    
Given an initial placement and energy distribution for $k$ energy-sharing agents on an $n$-node path, we give an $O(n+k)$ algorithm to generate a set of valid 
trajectories  whenever the exploration of the path is possible. We also show that a path can always be explored if the total energy of energy-sharing agents  
is $\frac 3 2$ times the total weight of edges in the path. In contrast, we show that there are energy configurations for which any total amount of energy is insufficient for 
path exploration {\em without} energy sharing. 

In Section~\ref{sec:Exploring a Tree} we study exploration of trees. We first observe 
that without energy sharing the exploration of trees is NP-complete.
Then, for an $n$-node tree, we give an 
%$O(n+ \max \{ k, \ell \} k^2)$ 
$O(n+ \ell k^2)$ algorithm that finds an exploration strategy if one exists, where $\ell $ is the number of leaves in the tree. 
% We 
%show that on an $n$-node tree, with energy sharing, there is an $O(nk^2)$ algorithm to determine whether there is a successful  exploration strategy. 

%Note that the algorithm for data broadcast in trees with energy sharing agents takes time $O(nk^2)$
% \hl{EK: the abstract has $O(n k \log k)$}. 

In Section~\ref{sec:General Graphs}, we consider exploration of general graphs. We show that the problem is NP-hard even for 3-regular graphs. In addition, we show that it is always possible to find an exploration strategy if the total energy of the agents is at least twice the total weight of the edges; moreover, this is asymptotically optimal, even for trees. 

%What about collaborative exploration when agents cannot show \hl{EK: share?} energy? A simple modification of our algorithm for paths will suffice to solve the problem (essentially you do not carry forward a deficit). However, a straightforward reduction from partition shows that the problem is NP-hard even on a star graphs. Given an instance of the partition problem $S = \{a_1, a_2, \ldots, a_n \}$, let $T = \sum_{i=1}^n a_i/2$. We  construct a star graph, with $n+2$ edges incident on the central node. Of these, $n$ edges have weight $a_1/2, a_2/2, \dots, a_n/2$ respectively, and two additional edges each have cost $T$. Assume two agents are at the central node of the star graph with energy $3T/2$ each. Then there is a partition of the set $S$ if and only if the there is an exploration strategy (without energy sharing) for the two agents on the star graph.  Thus, there is a difference between the complexity of the exploring trees using energy-sharing agents versus non-sharing agents. 

Therefore our results show that allowing energy to be shared between agents makes exploration possible in many situations when it would not be possible without sharing energy. Furthermore, the total energy needed for exploration is at most twice (at most 3/2) the total weight of the edges in the graph (path respectively), while there is no upper bound on the total energy needed for exploration if agents cannot share energy, even when the graph to be explored is a path. 

%  and makes it easier to find exploration strategies (at least in trees).  

%Therefore our results show that allowing energy to be shared between agents makes exploration possible in some situations when it would not be possible without sharing energy, uses less total energy {\em how much less?} \hl{EK: maybe reformulate},  and makes it easier to find exploration strategies (at least in trees).}

%\hl{EK: to complete}
%In Section~\ref{sec:Exploring a Path} we look at path exploration,
%in Section~\ref{sec:Exploring a Tree} we look at tree exploration,
% we look at exploration of general graphs.
%We conclude the paper in Section~\ref{sec:conclusion}.

%\section{Unlimited Storage Capacity}
%\label{sec:unlimited}

%\subsection{Line Segments}
%
%\hl{To Do!}
%\begin{enumerate}
%\item Optimal algorithm using energy at most $\frac 32$ times the length of the segment.
%\end{enumerate}

\section{Exploring a Path}
\label{sec:Exploring a Path}

In this section we consider the case when the graph is a simple path on $n$ nodes; without loss of generality, we assume that the path is embedded in the
horizontal line segment $[0,1]$, and we will refer to the movements of agents in their trajectories as being  left/right movements on the segment.  
 Clearly, in case the graph is given in the usual graph representation, this embedding can be obtained in $O(n+k)$ time.
The path exploration problem can therefore be restated  as follows:
\noindent
\begin{problem}[Segment Exploration]
\label{problem:p1}
Given mobile agents $r_1,r_2,\ldots,r_k$ with energies  $e_1,e_2,\ldots,e_k$, located initially in positions
$0\leq s_1 \leq s_2 \leq \cdots \leq s_k\leq 1$ of a line segment  $[0, 1]$, respectively,  find a set of valid trajectories of these agents that explore the segment, if possible.
\end{problem}

A trajectory $t_i$ of  agent $r_i$ explores  a closed sub-segment $a_i$ of $[0,1]$ containing  $s_i$. Let $b_i^{\ell}$, $b_i^r$ be the left, right end point of this sub-segment. We want to find a valid set of trajectories, i.e.,  a set that explores the line segment $[0, 1]$, and  there exists a schedule for energy transfers such that every agent has enough energy to follow its trajectory.

We first observe that in the case of exploring a line segment with the possibility of energy sharing some assumptions on the shape of valid trajectories can be made without loss of generality.

\vspace*{-3mm}
\begin{enumerate}
\setlength{\itemsep}{0pt}

\item  The segments $a_1,a_2,\ldots,a_k$ explore (or cover) $[0,1]$ and they don't overlap, i.e., $b_1^{\ell} = 0$, $b_n^r = 1$, and $b_{i}^r=b_{i+1}^{\ell}$ for
$1\leq i \leq k-1$.
\item  Trajectory $t_i$ starts at $s_i$, goes  straight to one of the endpoints of $a_i$. When both endpoints are different from $s_i$, it
 turns around and goes straight the other endpoint of $a_i$.
Thus,  in this case the trajectory {\it covers doubly} a sub-segment between $s_i$ and the endpoint where it turns around, and the trajectory has a {\em doubly covered part} and a {\em singly covered part}.
 \item A transfer of energy between two agents $r_i$ and $r_{i+1}$ may occur only at  their {\em meeting point} $b_i^r$. Thus at  $b_i^r$ exactly one of the following occurs:

%\vspace*{-3mm}
\begin{enumerate}
\item There is no energy transfer.
\item  There is  energy transfer from $r_i$ to $r_{i+1}$. In that case $t_{i+1}$ does not end at that point,  it ends at $b_{i+1}^r$, and either $b_{i+1}^{\ell}=s_{i+1}$ or $b_{i+1}^{\ell}$
is a point where the trajectory $t_{i+1}$ turns around to the right.
\item There is energy transfer from $r_{i+1}$ to $r_i$. In that case $t_{i}$ does not end at that point, it ends at $b_{i}^{\ell}$ and either $b_i^r=s_i $ or $b_i^r$  is a point where the trajectory $t_{i}$ turns around to the left.
\end{enumerate}
\end{enumerate}

The next lemma,  stated without proof, 
specifies two additional restrictions that can be imposed on the nature of valid trajectories that will be applied by our algorithm.

\begin{lemma}
\label{lemma:l1}
%Assume we are given mobile agents $r_1,r_2,\ldots,r_k$ with energies  $e_1,e_2,\ldots,e_k$, located initially in positions
%$s_1 \leq s_2 \leq \cdots \leq s_k$ of a line segment  $[0, 1]$, respectively.
%Assume furthermore that we have a set of valid trajectories $T=\{t_1,t_2,\ldots, t_k\}$ of these agents for  the exploration  of  the segment.
Assume that the segment $[0,1]$ can be explored by a set of valid trajectories $T=\{t_1,t_2,\ldots, t_k\}$ of the agents. Then there is a {\em canonical} set of valid trajectories  $T'=\{t'_1,t'_2,\ldots, t'_k\}$  that explore the  segment
%with the given energies
such that
\begin{enumerate}[label=(\roman*)]
%\item They cover of the  segment $[0,1]$ with the given energies.
\item If agent $r_i$  receives energy from a right (left) neighbour then it receives it at its initial position $s_i$, and its trajectory may only go in straight line segment from $s_i$ to the left (right).
\item For  each trajectory, its singly covered part is at least as long as its doubly covered part.
\end{enumerate}
\end{lemma}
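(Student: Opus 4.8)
The plan is to start from any valid \emph{canonical} set of trajectories, i.e.\ one already satisfying the three structural conditions (1)--(3) listed above, and to transform it into one that additionally obeys (i) and (ii) by a sequence of local \emph{exchange} operations between adjacent agents, controlled by a potential. The natural potential is the total distance travelled by all agents. Since the covered sub-segments $a_1,\dots,a_k$ partition $[0,1]$ by condition (1), the total covered length is always $1$, so the total distance equals $1+\sum_i D_i$, where $D_i$ is the length of the doubly covered part of $t_i$. Thus minimizing total distance is the same as minimizing total double coverage, and every exchange below will be shown not to increase $\sum_i D_i$ while preserving validity (feasible energy levels \emph{and} a schedulable set of transfers). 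Concretely, I would fix a valid canonical solution that minimizes $\sum_i D_i$ and, among those, minimizes the number of agents violating~(i), and argue that this solution satisfies both (i) and (ii).

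To establish (i), suppose some agent $r_i$ receives energy from, say, its right neighbour $r_{i+1}$ but not in the straight-line-from-$s_i$ form. By case (c) of condition~3, $r_i$ then runs $s_i\to b_i^r\to b_i^\ell$, doubly covering $[s_i,b_i^r]$. The exchange moves the shared boundary from $b_i^r$ to $s_i$: agent $r_i$ is straightened to go directly left from $s_i$ to $b_i^\ell$ (zeroing its double coverage), and $r_{i+1}$ is extended to cover $[s_i,b_i^r]$ as part of a single sweep. The key bookkeeping is the transfer amount: if the old transfer was $e$ and the detour length is $d=b_i^r-s_i$, then setting the new transfer to $e-d$ (when positive) keeps $r_{i+1}$'s net expenditure---travel plus energy given---unchanged, while $r_i$'s energy requirement only drops; one checks that $\sum_i D_i$ does not increase and the violation count strictly decreases, contradicting minimality unless no such $r_i$ exists. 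The case $e-d\le 0$ is where the net energy flow reverses ($r_i$ now has enough surplus to fund $r_{i+1}$'s extra sweep), which I would handle by re-labelling $r_{i+1}$ as the receiver and continuing. The left-neighbour case is symmetric.

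For (ii), note first that after (i) every receiving agent travels in a straight line and so has $D_i=0\le S_i$. For every other agent the decisive observation is that it is free to choose which endpoint of $a_i$ to visit first: a neutral agent obviously, and a \emph{donor} because it may perform its transfer at the shared meeting point either at the end of its trajectory or in passing and then continue to the far endpoint, synchronizing the meeting by waiting (permitted by the model, and costing no energy). Orienting each such agent to visit the \emph{nearer} endpoint of $a_i$ first makes its doubly covered part the shorter of the two sides of $s_i$, so $D_i\le S_i$. If instead $D_i>S_i$ held at the chosen optimum, flipping that agent's orientation would strictly decrease $\sum_i D_i$ while preserving feasibility (the flipped donor reaches its meeting point with at least as much surplus, since it then doubly covers the shorter side), contradicting minimality. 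Hence (ii) holds.

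I expect the main obstacle to be the energy bookkeeping and rescheduling in the part-(i) exchange rather than the geometry. Re-routing one agent and shifting a boundary alters the amount, and even the direction, of a transfer, and the definition of validity requires that after every change the transfers among all $k$ agents still admit a single consistent time-line; ruling out that repairing one meeting point corrupts an adjacent one, and verifying feasibility in the sub-case where the flow reverses, is what needs care. I would contain this by always processing agents in the direction of energy flow and by recording that each exchange leaves the multiset of meeting points and transfer directions compatible with a left-to-right (respectively right-to-left) schedule, so that inserting waits suffices to realize it.
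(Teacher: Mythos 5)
Your overall strategy---local exchanges at the meeting point controlled by the total doubly-covered length, followed by an orientation flip for part (ii)---is the same as the paper's, and your argument for (ii) matches the paper's essentially step for step: once (i) holds, a receiving agent is straight-line and has no double coverage, and any other agent may be reoriented to visit the nearer endpoint of $a_i$ first, delivering the same transfer amounts by waiting at the meeting point. The problem lies in part (i), in exactly the sub-case you flag but do not resolve. Write $d=b_i^r-s_i$ and let $x_2$ be the energy $r_{i+1}$ hands to $r_i$ at $b_i^r$. First, your bookkeeping charges $r_{i+1}$ only $d$ for extending to $s_i$, i.e.\ you assume it ends its walk there (``a single sweep''); but by condition (2) the trajectory $t_{i+1}$ may equally well visit $b_{i+1}^{\ell}=b_i^r$ \emph{first} and then turn right, in which case the extension is a round trip costing $2d$, the segment $[s_i,b_i^r]$ stays doubly covered, and the correct threshold for keeping the flow direction is $x_2\geq 2d$ rather than $x_2\geq d$. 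This is precisely why the paper's Case~1 is conditioned on $x_2\geq 2y$ with $y=d$ and has $r_{i+1}$ return to $b_i^r$.

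Second, and more seriously, when $x_2<2d$ your plan to ``re-label $r_{i+1}$ as the receiver and continue'' would fail on two counts. The agent $r_{i+1}$ may arrive at $b_i^r$ with only $x_2<d$ units of energy and therefore cannot physically reach $s_i$ to collect the reversed transfer, so the exchanged configuration is not a valid set of trajectories at all. And even when it can reach $s_i$, the violation of (i) is merely displaced from $r_i$ to $r_{i+1}$, which now receives energy from its left neighbour at $s_i\neq s_{i+1}$; your secondary potential (number of violating agents) does not strictly decrease, so the minimality argument yields no contradiction and termination of the ``continuing'' process is unproved. The missing idea is the paper's handling of this sub-case: move the meeting point inward to $c=b_i^r-x_2/2$, have $r_i$ turn around at $c$, have $r_{i+1}$ extend only to $c$, and delete the transfer entirely. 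A direct computation shows both agents then reach every point of their modified walks with at least as much energy as before, no new transfer (hence no new violation of (i)) is created, and the exchange is feasible regardless of whether $r_{i+1}$ must return to $b_i^r$. With that replacement your induction goes through; without it, the part-(i) argument has a genuine hole.
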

%\begin{proof} See the appendix. \qed
%\end{proof}

\begin{algorithm}
\caption{\textsc{Path}$(i,\ell_i,tr_i)$}
\label{alg-path}
{\normalsize
\begin{algorithmic}[1]
\If {$tr_i\leq 0$} 
	\If{ $e_i< s_i-\ell_i +tr_i$} \Comment{\textbf{Case 1.1} (deficit increases)}		
\State \LONGCOMMENT{$r_i$ waits to receive energy $|tr_{i+1}|$ 
		from $r_{i+1}$. Trajectory $t_i$ is from $s_i$ to $\ell_i$.  Transfer $|tr_i|$ of energy to $r_{i-1}$}
		\State $\ell_{i+1} \gets  s_i$  
\State $tr_{i+1} \gets e_i +tr_i -(s_i-\ell_i)$ 
	\ElsIf {$e_i \geq s_i-\ell_i+tr_i$} \Comment{\textbf{Case 1.2} (deficit eliminated)}
	\State \LONGCOMMENT{Using the values of $\ell_i$, $s_i$ and   $e_i-|tr_i|$, select a canonical trajectory $t_i$ originating in  $s_i$, located 
between $\ell_i$ and $s_{i+1}$ as in {\bf Cases 1.2.1} to {\bf 1.2.3}. Transfer $|tr_i|$ of energy to $r_{i-1}$}
\State $tr_{i+1} \gets e_i +tr_i -length(t_i)$
\State $\ell_{i+1} \gets right\_endpoint(t_i)$
	\EndIf
\Else \Comment{\textbf{Case 2}: (Energy surplus)}
	\State \LONGCOMMENT{A surplus of energy at $\ell_i$ implies that  $\ell_i=s_i$. 	The trajectory $t_i$ of $r_i$ is ``from $s_i$ to the right, but at most to $s_{i-1}$''.}
	\State $\ell_{i+1} \gets s_i+\min\{s_{i+1}-s_i,s_i+e_i+tr_i\}$
\State  $tr_{i+1}\gets tr_i +e_i-(\min\{s_{i+1}-s_i,s_i+e_i+tr_i\}-s_i))$
\EndIf
\If {$(i <k)$}
	return {\textsc{Path}$(i+1,\ell_{i+1},tr_{i+1})$}
\ElsIf {$(\ell_{k+1} <1)$ or $(tr_{k+1} < 0)$}
\Comment{Insufficient energy}
	\State return with \textbf{solvable} $\gets$ false
\Else 
	\State return with \textbf{solvable} $\gets$ true
\EndIf
\end{algorithmic}
}
\end{algorithm}
We now describe a recursive, linear time algorithm for Problem \ref{problem:p1} to find canonical trajectories as described in Lemma \ref{lemma:l1} above. The trajectories are assigned to agents from left to right, determining whether more energy needs to be transferred to complete the coverage on the left, or some surplus  
energy is to be transfused to agents on the right.   If $i$ is the index of the leftmost agent not used in the exploration of initial sub-segment $[0, \ell_i]$ and if $tr_i$ is the energy deficit (negative) or surplus (positive) left after exploring this sub-segment using only agents 1 through $(i-1)$, then the procedure call \textsc{Path}$(i, \ell_i, tr_i)$ decides whether a solution to the exploration problem for the remaining sub-segment $[\ell_i, 1]$ is possible via canonical trajectories. It does so by greedily deploying agent $i$ to use the least amount of energy to cover at least the segment $[\ell_i, s_i]$: 
if this does not lead to energy deficit, then the trajectory of
agent $r_i$ is allowed to cover as much of the segment $[s_i, s_{i+1}]$   
as it can, which determines the position $\ell_{i+1}$. Then, a recursive call to \textsc{Path} is made with arguments $i+1$, ~$\ell_{i+1}$ and the resulting energy deficit or surplus $tr_{i+1}$. 
For an easier understanding of the algorithm, the description below is annotated in detail and  Figure~\ref{figure:line} provides an example for each case encountered in the algorithm.

\begin{figure}[htb]
\centerline{
\includegraphics[width=4.8in]{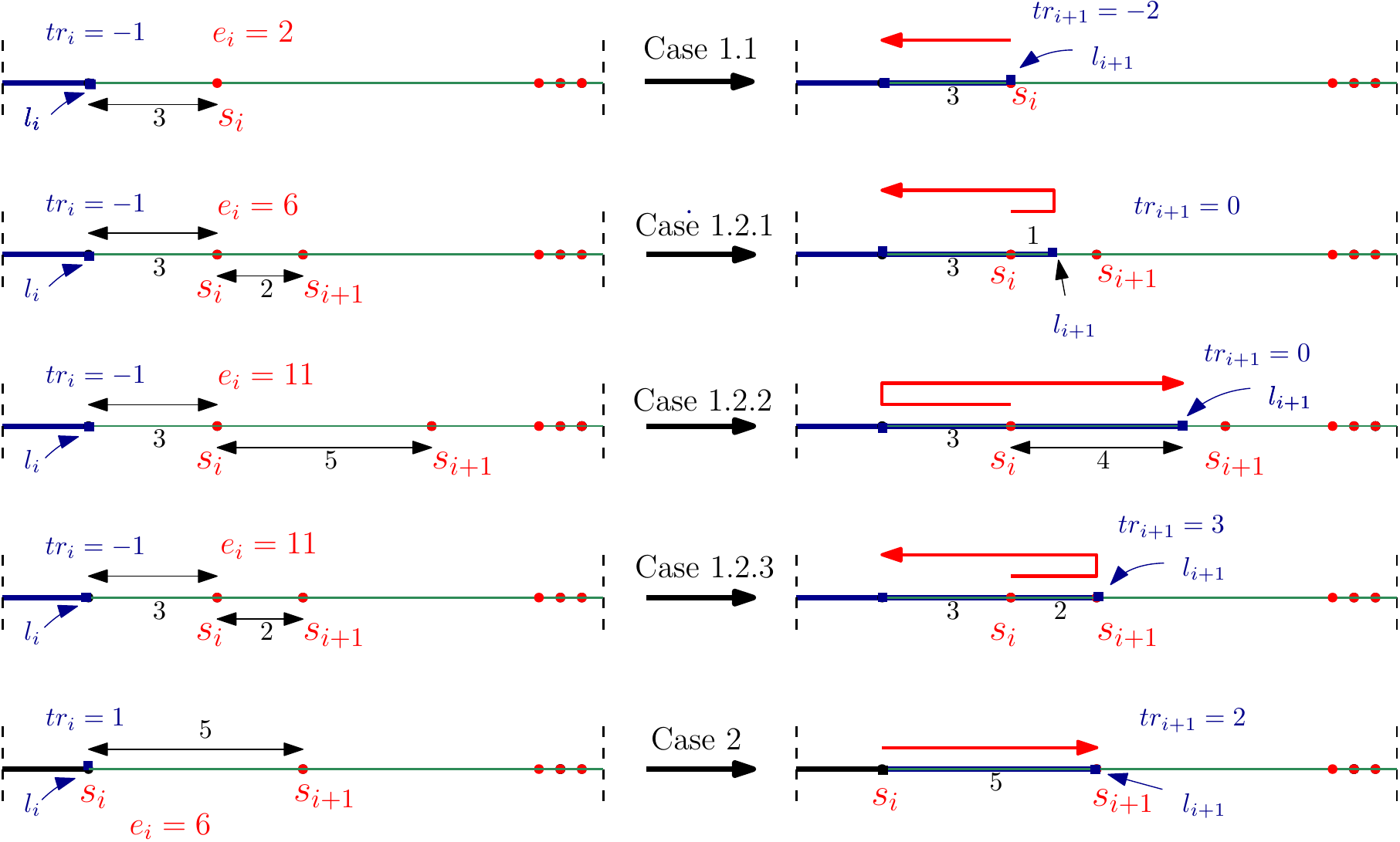}}
\caption{Assignment of a trajectory to  agent $r_i$ by Procedure \textsc{Path}.
The trajectory established in each case is in red on the right part of the figure. Cases 1.1 to 1.2.3 deal with an energy deficit prior to an assignment of a trajectory  to $r_i$, and  Case 2 deals with  a surplus  of energy. In Cases 1.2.1 to 1.2.3 the deficit is eliminated.}
\label{figure:line}
\end{figure}

We use the following lemma to simplify the proof of the main theorem of this section. 

\begin{lemma} \label{lm:linesegment}
Let $t_1, t_2, \ldots, t_{i-1}$ be the trajectories  and $\ell_i$, $tr_i$ be the values established by Procedure \textsc{Path} after $i-1$ recursive calls, $0\leq i \leq k+1$. These trajectories
explore the segment $[0,\ell_i]$ using in total energy $(\sum_{j=1}^{i-1}e_j)-tr_i$ and this is the minimum energy required by the agents to explore the segment $[0, \ell_i]$.
Furthermore, if $tr_i<0$ (deficit)  then $\ell_i=s_{i-1}$, if $tr_i>0$ (surplus) then   $\ell_i=s_i$, and if  $tr_i=0$ then $s_{i-1}\leq \ell_i\leq s_{i}$.
\end{lemma}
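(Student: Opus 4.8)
The plan is to prove all four assertions — coverage of $[0,\ell_i]$, the energy identity, minimality, and the positional dichotomy — \emph{simultaneously} by induction on the number $i$ of completed recursive calls, since each step of the algorithm consumes exactly the invariants maintained by the previous one. For the base case $i=1$ the initial call is $\textsc{Path}(1,0,0)$, so $\ell_1=0$, $tr_1=0$, no agent has yet been deployed, the explored set is the single point $\{0\}$, the energy spent is $\sum_{j=1}^{0}e_j - tr_1 = 0$ and is trivially minimal, and the positional claim holds under the convention $s_0=0$.

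For the inductive step I assume the statement after $i-1$ calls and examine the single call $\textsc{Path}(i,\ell_i,tr_i)$, which fixes the trajectory $t_i$ and outputs $\ell_{i+1}$ and $tr_{i+1}$. I split along the algorithm's own branches: the deficit-increasing Case 1.1, the deficit-eliminating Cases 1.2.1--1.2.3, and the surplus Case 2. In every branch $t_i$ is a canonical trajectory (Lemma~\ref{lemma:l1}) starting at $s_i$ that, together with the transfer of $|tr_i|$ to $r_{i-1}$ when $tr_i\le 0$, covers exactly $[\ell_i,\ell_{i+1}]$; combined with the inductive hypothesis that $t_1,\dots,t_{i-1}$ cover $[0,\ell_i]$ this gives coverage of $[0,\ell_{i+1}]$. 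The energy identity reduces to checking, case by case, that the pseudocode's update is $tr_{i+1}=(e_i+tr_i)-\mathrm{length}(t_i)$ — reading $\mathrm{length}(t_i)$ appropriately in Case 2 — so that by telescoping with the inductive total the new total spent is $(\sum_{j\le i}e_j)-tr_{i+1}$; this is a direct substitution of the assignments. The positional dichotomy then follows by matching each branch's explicit value of $\ell_{i+1}$ against the resulting sign of $tr_{i+1}$: Case 1.1 keeps $tr_{i+1}<0$ with $\ell_{i+1}=s_i$, while the deficit-eliminating and surplus branches yield either $tr_{i+1}=0$ with $s_i\le\ell_{i+1}\le s_{i+1}$ or $tr_{i+1}>0$ with $\ell_{i+1}=s_{i+1}$.

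The substantive part, and the step I expect to be the main obstacle, is the minimality claim — that no valid set of trajectories explores $[0,\ell_{i+1}]$ with strictly less total energy. Here I would first invoke Lemma~\ref{lemma:l1} to restrict attention to canonical trajectories, and then use the decomposition that for canonical trajectories the total energy equals the sum of the lengths of the covered sub-segments plus the total \emph{doubly-covered} length. Since the covered sub-segments tile $[0,\ell_{i+1}]$, the first term is fixed at $\ell_{i+1}$, so minimizing energy is equivalent to minimizing total double coverage. It then remains to show that the greedy left-to-right assignment produced by \textsc{Path}, which double-covers only when an agent is forced to reach leftward (to cover a still-uncovered sub-segment, or to deliver energy that pays off a deficit) and always does so on its shorter side, incurs no more double coverage than any competing exploration.

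I would establish this by an exchange argument anchored to the inductive hypothesis: given any optimal canonical exploration of $[0,\ell_{i+1}]$, its cost restricted to the prefix $[0,\ell_i]$ is at least $(\sum_{j<i}e_j)-tr_i$ by minimality for $[0,\ell_i]$, and the remaining cost on $[\ell_i,\ell_{i+1}]$ is bounded below by the marginal energy that \textsc{Path} charges to agent $i$. The delicate point — the real obstacle — is that energy transfers may cross the boundary $\ell_i$, so "restricting" an arbitrary optimum to the prefix is not immediate. I would handle this by showing that any transfer across $\ell_i$ can be re-routed, or accounted for as a deficit/surplus at $\ell_i$, so that the prefix still admits a valid exploration of cost at least the inductive minimum; once that is done, the marginal lower bound on $[\ell_i,\ell_{i+1}]$ and hence minimality for $[0,\ell_{i+1}]$ follow, completing the induction.
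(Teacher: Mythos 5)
Your proposal matches the paper's proof in essence: both proceed by induction on $i$, with the inductive step split along the algorithm's branches (surplus, deficit increased, deficit eliminated and its sub-cases), both invoke Lemma~\ref{lemma:l1} to restrict attention to canonical trajectories, and both establish minimality by arguing that the greedy trajectory assigned to $r_i$ is the energy-optimal way to extend coverage past $\ell_i$. Your explicit reduction of minimality to minimizing total doubly-covered length, and your flagging of the cross-boundary energy-transfer issue, are refinements of the same argument that the paper handles more tersely by asserting per-case local optimality (e.g., "a straight line trajectory is energy optimal", double coverage on the shorter side), so the two proofs are essentially the same.
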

\begin{proof} The proof is done by induction on $i$ and omitted for the lack of space. \qed
\end{proof}
\begin{theorem} \label{th:linesegment1}
Assume we are given mobile agents $r_1,r_2,\ldots,r_k$ with energies  $e_1,e_2,\ldots,e_k$, located initially in positions
$s_1 \leq s_2 \leq \cdots \leq s_k$ of a line segment $[0,1]$ respectively.
Let $r_{k+1}$ be an additional  ``dummy agent'' at position 1 with zero energy.  
%$e=(e_1,e_2,\ldots,e_k,0)$ and $s=(s_1, s_2,\ldots,  s_k,1)$ be the augmented vectors of energies and positions with a dummy agent at position 1 with zero energy. 
Then the procedure call \textsc{Path}$(1,0,0)$ on this instance runs in $O(k)$ time and terminates with \textbf{solvable} being true if and only if
there are trajectories of agents that explore the line segment $[0,1]$.
\end{theorem}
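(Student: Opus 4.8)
The plan is to build the whole argument on the two lemmas already in hand: Lemma~\ref{lm:linesegment}, which certifies that after each recursive call the greedily constructed trajectories explore a prefix $[0,\ell_i]$ with the \emph{minimum} possible energy, and Lemma~\ref{lemma:l1}, which guarantees that a feasible instance admits a canonical valid solution of exactly the type the procedure builds. The time bound is immediate: the recursion descends through the agents $i=1,\dots,k$, doing a constant amount of case analysis and arithmetic per level (using the sentinel $s_{k+1}=1$ supplied by the dummy agent when it deploys $r_k$), so \textsc{Path}$(1,0,0)$ runs in $O(k)$ time.

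For correctness I would introduce $M(L)$, the minimum total energy with which the $k$ agents (at their given positions) can explore $[0,L]$, so that exploration of $[0,1]$ is feasible exactly when $M(1)\le\sum_{j=1}^k e_j$. Applying Lemma~\ref{lm:linesegment} at the final index $i=k+1$ supplies the terminating values $\ell_{k+1}\le 1$ and $tr_{k+1}$, together with the identity $M(\ell_{k+1})=\bigl(\sum_{j=1}^k e_j\bigr)-tr_{k+1}$ and the position relations ($tr_{k+1}>0\Rightarrow\ell_{k+1}=s_{k+1}=1$, $tr_{k+1}=0\Rightarrow s_k\le\ell_{k+1}\le 1$, $tr_{k+1}<0\Rightarrow\ell_{k+1}=s_k$). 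I will then show that the procedure returns \textbf{solvable} iff $\ell_{k+1}=1$ and $tr_{k+1}\ge 0$, and that this is equivalent to $M(1)\le\sum_j e_j$.

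For the ``if'' direction, when $\ell_{k+1}=1$ and $tr_{k+1}\ge 0$ the trajectories produced explore all of $[0,1]$ with energy $M(1)=\sum_j e_j-tr_{k+1}\le\sum_j e_j$; since each meeting point carries an explicitly matched transfer, this set is valid, so exploration succeeds. For the ``only if'' direction, assume feasibility, i.e. $M(1)\le\sum_j e_j$. Because $[0,\ell_{k+1}]\subseteq[0,1]$, the function $M$ is non-decreasing and $M(\ell_{k+1})\le M(1)\le\sum_j e_j$, which with the identity forces $tr_{k+1}\ge 0$. It remains to rule out $\ell_{k+1}<1$: by the position relations, $tr_{k+1}\ge 0$ together with $\ell_{k+1}<1$ forces $tr_{k+1}=0$, hence $M(\ell_{k+1})=\sum_j e_j$ with $s_k\le\ell_{k+1}<1$; but every agent starts at a position $\le s_k\le\ell_{k+1}$, so covering any point to the right of $\ell_{k+1}$ demands strictly more travel, i.e. $M$ is strictly increasing there and $M(1)>\sum_j e_j$, contradicting feasibility. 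Thus $\ell_{k+1}=1$ and the procedure reports \textbf{solvable}.

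The step I expect to be the main obstacle is precisely this last one: showing that exhausting the entire budget to reach only $\ell_{k+1}<1$ is incompatible with covering $[0,1]$. The crux is the strict monotonicity of $M$ to the right of the rightmost start $s_k$ (equivalently, that the dummy placed at position $1$ forces any genuine coverage gap to surface as a deficit $tr_{k+1}<0$ rather than as a premature halt with $tr_{k+1}=0$); once this is pinned down via the position relations of Lemma~\ref{lm:linesegment}, the remainder is routine bookkeeping plus the validity of the transfer schedule that the case analysis attaches to each meeting point.
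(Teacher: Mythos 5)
Your proposal is correct and follows essentially the same route as the paper: both directions rest entirely on Lemma~\ref{lm:linesegment}, with the ``only if'' direction obtained by the same contradiction (a terminal deficit means even $[0,s_k]$ costs more than the available energy; a terminal surplus-free halt at $\ell_{k+1}<1$ means the budget is already exhausted on the largest coverable prefix), and your function $M(L)$ is just explicit notation for what the paper states in words. The one detail the paper spells out that you only assert is the validity of the produced schedule in the ``if'' direction: matched transfer amounts alone do not suffice under the paper's definition of valid trajectories, and the paper closes this by ordering the agents' movements into rounds (agents receiving no energy move first, then those who received energy in the previous round), exploiting the fact that every receiver gets its energy at its initial position.
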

\begin{proof}%[Theorem~\ref{th:linesegment1}]
It is clear from the description  of the algorithm that it is linear in $k$. When the algorithm terminates with \textbf{solvable} being true, it is straightforward to see that a schedule can be determined for the agents, that creates a {\em valid} trajectory for each agent. To wit, in Round 1 all agents that receive no energy follow their trajectory and do the energy transfers as calculated. Notice that energy is received by agents when in their initial positions. \\
In Round $i$ all agents that receive energy in Round $i-1$ follow their trajectory and do the energy transfers as calculated. Therefore, if the algorithm terminates with \textbf{solvable} being true, the exploration of segment $[0,1]$ is possible, and our algorithm returns valid trajectories for the agents to achieve this coverage.
Thus we only need to show that the exploration of the segment $[0,1]$ is not possible when the algorithm terminates with \textbf{solvable} being false. In fact, \textbf{solvable} is set to false in the algorithm only if either $tr_{k+1} < 0$ or $\ell_{k+1} < 1$. By Lemma~\ref{lm:linesegment},
 if $tr_{k+1}<0$ then we can cover the segment up to $l_{k+1}=s_k\leq 1$ but we need more than the given
$(\sum_{j=1}^{k}e_j)$ in energy.
 If  $tr_{k+1}=0$ and $\ell_{k+1}<1$ then $[0, \ell_{k+1}]$ is the maximum segment that can be explored by the agents using $(\sum_{j=1}^{k}e_j)$ energy. In both cases, the exploration  of the entire segment $[0, 1]$ is impossible with the given initial positions and energies.
\qed
\end{proof}

Consider the case when we apply our algorithm to an input instance with the sum of energies $ \sum_1^ke_i\geq \frac{3}{2}$.
Since in each trajectory the part covered doubly is less or equal  the part covered singly, the  energy deficit/surplus $tr_{k+1}$ obtained after  assigning a trajectory to agent $r_k$  is at most $\sum_1^ke_i -\frac{3}{2} $, and it cannot be negative. Also $\ell_{k+1}$ cannot be less than 1 since there would be an unused surplus of energy of at least
$\sum_1^ke_i -\frac{3\ell_{k+1}}{2}>0$. Thus with $\sum_1^ke_i \geq \frac 3 2$ the algorithm terminates with valid exploration trajectories for the segment.

 On the other hand when the input instance consists of a single agent $r_1$ located at point $0.5$, the energy needed by $r_1$ to cover the segment $[0,1]$ is equal to $\frac{3}{2}$.
\begin{corollary} \label{th:linesegment2}
%Assume we are given mobile agents $r_1,r_2,\ldots,r_k$ with energies  $e_1,e_2,\ldots,e_k$, located initially in positions
%$s_1 \leq s_2 \leq \cdots \leq s_k$ of a line segment $[0,1]$, respectively.\\
%If $\sum_1^ke_i \geq \frac{3}{2}$ then  the line segment $[0,1]$ can be explored by these agents.\\
%There are input instances for which exploration  of the line segment $[0,1]$ is not possible  if $\sum_1^ke_i< \frac 3 2$.
The segment $[0,1]$ can always be explored by $k$ agents with canonical trajectories if the sum of their initial energies is at least $\frac{3}{2}$, but exploration may be impossible in some instances if the sum is less than $\frac{3}{2}$.
\end{corollary}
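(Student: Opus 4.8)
The plan is to split the statement into a sufficiency direction (energy $\tfrac32$ always suffices) and a tightness direction (any smaller total may fail), and to route the sufficiency argument entirely through Procedure \textsc{Path} and the invariants already established for it. The only structural fact I would need is property (ii) of Lemma~\ref{lemma:l1}: in every canonical trajectory the singly covered part is at least as long as the doubly covered part. Writing the covered sub-segment of one trajectory as $L = a + b$, where $a$ is its singly and $b$ its doubly covered amount, the energy it consumes is $a + 2b = L + b$; the constraint $a \ge b$ with $a+b=L$ forces $b \le L/2$, so the trajectory spends at most $\tfrac32 L$ energy. Summing over the disjoint covering segments $a_1,\dots,a_k$, the total energy consumed to explore an initial portion $[0,\ell]$ is at most $\tfrac32\,\ell$.

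Next I would combine this with Lemma~\ref{lm:linesegment}, which guarantees that after all recursive calls the quantity $\bigl(\sum_{j=1}^k e_j\bigr) - tr_{k+1}$ is exactly the minimum energy required to explore $[0,\ell_{k+1}]$. Since the trajectories produced by the algorithm realize this minimum and are canonical, the per-trajectory bound gives
$\sum_j e_j - tr_{k+1} \le \tfrac32\,\ell_{k+1} \le \tfrac32 \le \sum_j e_j$,
whence $tr_{k+1}\ge 0$ immediately: no residual deficit can remain when $\sum_j e_j \ge \tfrac32$. It then remains to rule out $\ell_{k+1}<1$. If $tr_{k+1}>0$, the surplus clause of Lemma~\ref{lm:linesegment} applied to the dummy agent $r_{k+1}$ (at position $1$) forces $\ell_{k+1}=s_{k+1}=1$; if instead $tr_{k+1}=0$, then $\sum_j e_j \le \tfrac32\,\ell_{k+1}$ together with $\sum_j e_j\ge\tfrac32$ yields $\ell_{k+1}\ge 1$. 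In either case $\ell_{k+1}=1$ and $tr_{k+1}\ge 0$, so neither failure test in \textsc{Path} fires, and Theorem~\ref{th:linesegment1} delivers a valid set of exploring trajectories together with a feasible transfer schedule.

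For tightness I would simply exhibit the single-agent instance with $r_1$ placed at $0.5$: any walk that starts at $0.5$ and touches both endpoints $0$ and $1$ has length at least $\tfrac32$ (reach the nearer endpoint, reverse, and cross to the far endpoint costs $\tfrac12 + 1$), so any total energy strictly below $\tfrac32$ cannot explore $[0,1]$. This settles the converse and shows the threshold is sharp.

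I expect the only genuinely delicate step to be forcing $\ell_{k+1}=1$. The care required there is twofold: one must apply the $\tfrac32 L$ bound to the \emph{minimum required} energy handed over by Lemma~\ref{lm:linesegment} rather than to some arbitrary trajectory collection, and one must invoke the surplus clause precisely at the dummy agent so that a strictly positive leftover cannot coexist with an uncovered right portion of the segment. Everything else is the routine algebra of the $a+2b$ accounting and a direct appeal to the two lemmas and the theorem.
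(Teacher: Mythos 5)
Your proposal is correct and follows essentially the same route as the paper: both directions rest on the canonical-trajectory property (doubly covered $\le$ singly covered, hence at most $\tfrac32 L$ energy per covered length $L$), combined with the invariants of Lemma~\ref{lm:linesegment} to conclude $tr_{k+1}\ge 0$ and $\ell_{k+1}=1$, and the single agent at $0.5$ for tightness. Your write-up is somewhat more explicit than the paper's (notably the two-case argument forcing $\ell_{k+1}=1$ via the surplus clause at the dummy agent), but the underlying argument is identical.
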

\begin{remark} %Consider briefly the case when agents cannot share energy. 
If agents {\em cannot share energy}, regardless of $k$, exploration is impossible in an input instance where all $k$ agents are co-located at $0$, each with energy equal to $1-\epsilon$, which gives total energy greater than $k-1$. Thus, without energy-sharing, there is no upper bound on the total energy of agents that guarantees exploration of a path. 
We have also constructed a linear algorithm for the exploration of a path by agents that cannot share energy, however we cannot include it in this paper due to the limit on the number of pages. 
\end{remark}

\section{Exploring a Tree}
\label{sec:Exploring a Tree}
%\setcounter{algorithm}{1}
%
%\hl{TODO}
%\begin{enumerate}
%\item Algorithm for trees
%\end{enumerate}
In this section we consider a restricted case of the graph exploration problem, specifically when the graph is a tree. 
First we observe that,  {\em without energy exchange}, there is  a straightforward reduction from the partition problem showing that the exploration is NP-hard even on a star graphs: Given an instance of the partition problem $S = \{a_1, a_2, \ldots, a_n \}$, let $T = \sum_{i=1}^n a_i/2$. We  construct a star graph, with $n+2$ edges incident on the central node. Of these, $n$ edges have weight $a_1/2, a_2/2, \dots, a_n/2$ respectively, and two additional edges each have cost $T$. Assume two agents are at the central node of the star graph with energy $3T/2$ each. Then there is a partition of the set $S$ if and only if the there is an exploration strategy (without energy sharing) for the two agents on the star graph.  
However, for energy-sharing agents located on a tree we derive below a polynomial exploration algorithm (see also \cite{fraigniaud2006collective}).

Let $T$ be an edge-weighted tree with $k$ agents distributed across the $n$ nodes of $T$ with possibly several agents per node each of them with some non-negative energy. To simplify the design of our algorithm, we first preprocess the tree to transform it to a rooted binary tree where all the agents are located only at the leaves of the tree. We obtain such a tree from the initial tree $T$ in four stages: (a) by taking all the agents at every non-leaf vertex $v$ and shifting them to a new leaf node $l_v$ that is added to the tree and connected to $v$ via a zero-weight edge, (b) by repeatedly splitting vertices of degree more than $3$ into trees of maximum degree $3$ using zero-weight edges, (c) by collapsing any path with internal vertices of degree 2 into an edge whose weight equals the cumulative weight of the path, and (d) by converting the resulting $3$-regular unrooted tree into a rooted one by splitting one of its edges and making its midpoint be the root of the tree. Without loss of generality, we will denote this new, rooted tree as $T$. We note these preprocessing steps have complexity $O(n+k)$. Our problem can be stated as follows:

\begin{problem}[Tree Exploration]
Let $T$ be a rooted, edge-weighted binary tree obtained by preprocessing an unrooted edge-weighted tree with $k$ agents at its nodes, so that all the agents are now located at leaves in $T$ and have their given initial energies. For every node $v$, let $a_v$ be the initial number of agents inside subtree $T_v$ and let $e_v$ be the sum of their initial energies. Let $w_e\geq 0$ be the weight of edge $e$. If possible, find a set of valid trajectories for the agents that explore every edge of $T$ using only the given initial energies.
\end{problem}

Now, consider any \textit{feasible} exploration for the tree witnessed by a set of trajectories for the agents. The successful exploration of a subtree $T_v$ may either have necessitated additional energy brought into the subtree from outside, or there may be a surplus of energy that could have gone out of the subtree to explore other parts of the tree. Also, exploration may have needed a transfer of agents into the subtree (over and above its $a_v$ agents) or may have been accomplished with some agents made available to leave the subtree and contribute to the exploration of the rest of the tree.

We formalize this idea as follows. Let $B[v,i]$ denote the \textit{maximum} possible total \textit{surplus energy} that can \textit{leave} the subtree $T_v$ after it is fully explored so that $i$ agents can \textit{leave} the subtree with this total energy. Note that $i$ counts only the \textit{balance} of agents that depart from the tree, not individual arrivals and departures. Thus, we allow for $i$ being negative (i.e., $i$ agents enter the the tree on balance), or $B[v,i]$ being negative (i.e., $-B[v,i]$ amount of energy is needed to be brought in from \textit{outside} $T_v$ to explore it fully). We remark that:
\begin{enumerate}[label=(\roman*)]
    \item when $i\leq0$ and $B[v,i]\geq 0$, it means that an agent carrying excess energy \textit{must} leave $T_v$ but nevertheless, the overall balance of agents entering/leaving $T_v$ is non-positive.
    \item the $B[v,i]$ values do not take into consideration the energy expenditure that would be required to explore the edge from $v$ to its parent node in the tree.
    \item for node $v$, the value of $i$ can only be in the interval $[-k+a_v, ~a_v]$, because, on balance, at most $a_v$ agents can leave $T_v$ and at most $k-a_v$ can enter it.
\end{enumerate}

%\begin{enumerate}[label=(\roman*)]
%    \item $i>0$ and $B[v,i]\geq 0$: If $i>0$ agents are leaving $T_v$ while ensuring that $T_v$ has been fully explored, $B[v,i]$ is the maximal energy that they can bring with them from $T_v$.
%$i$ and $B[v,i]$ are {\em excess} agents and energy, respectively,  originating in $T_v$ but not needed anymore to complete $T_v$'s exploration.
%    \item $i>0$ and $B[v,i]<0$: $-B[v,i]$ is the minimum energy that needs to be sent to $T_v$ so that $i$ agents can come out of there while ensuring $T_v$ is fully explored.
%    \item $i\leq0$ and $B[v,i]\geq 0$: If $-i$ agents are sent to $T_v$, $B[v,i]$ is the maximal energy that can be brought from there. Note that this means an agent carrying excess energy must come out of $T_v$; still the overall balance $i$ of agents going to/from $T_v$ can be negative.
%    \item $i\leq 0$ and $B[v,i]<0$: $-B[v,i]$ is the minimal energy that has to be sent to $T_v$ so that when $-i$ agents are sent to $T_v$, the $T_v$ can be fully explored. Note that while $i=0$ means that the overall balance of agents entering $T_v$ is $0$, the edge from the parent of $v$ to $v$ must still be explored.
%\end{enumerate}

In order to simplify the calculation of $B[]$, we extend the definition of $B[]$ to edges as well: Let $e=(u,v)$ where $u$ is the parent of $v$. As described above, $B[v,*]$ denotes the surplus energy leaving subtree $T_v$.   $B[e,i]$ will denote the surplus energy available at $u$ \textit{along edge} $e$ with a balance of $i$ agents that could transit through $u$ from the direction of $v$. We observe that agents \textit{do} spend energy traversing $e$ itself and can also stop in the middle of $e$, and hence the values $B[e,*]$ and $B[v,*]$ are different. Since node $u$ has exactly two child edges below it, we can compute the $B[u,*]$ values by suitably combining the $B[]$ values of these edges.

It remains to show how to calculate $B[v,*]$ and $B[e,*]$ for each $v$ and $e$. In principle, we can consider all the possibilities of what the agents can do, but in reality it is enough to consider only the best possible activity with the desired balance of agents. The calculation is performed by calling procedures \HandleVertex~and \HandleEdge (described below in pseudocode) respectively for each vertex and for each edge of $T$. The computation proceeds in a bottom-up manner starting from leaves, with the boolean arrays \textbf{$\D[v]$} and \textbf{$\D[e]$} being used to ensure this flow.
In order to simplify the presentation in \HandleEdge, we assume that the assignment $B[e,i] \gets y$ is shorthand for $B[e,i] \gets \max (B[e,i], ~y)$ (with the initial values $B[e,*]$ being initialized to $-\infty$).

Our main result in this section is the following:
\begin{theorem}\label{thm:trees}
After transforming an unrooted tree with a specified distribution of initial agent locations and energies, procedure \HandleVertex, when applied to the root of the resulting rooted binary tree, correctly solves the tree exploration problem for the original tree in $O(n+ \ell k^2)$ time. 
%where $\ell^{\prime}$ is the number of leaves in the transformed tree. (Note: $\ell^{\prime} \leq \ell +k$ where $\ell$ is the number of leaves in the original tree.) 
It does so by correctly computing the optimal $B[v,i]$ values for every vertex $v$ and all relevant $i$ values for that vertex, in conjunction with procedure \HandleEdge~ that correctly computes the optimal $B[e,i]$ for every edge $e$ and all relevant $i$ values for that edge.
\end{theorem}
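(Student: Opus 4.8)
The plan is to prove both assertions — correctness and the $O(n+\ell k^2)$ running time — by a single bottom-up structural induction on the rooted binary tree $T$ produced by the preprocessing. For each vertex $v$ and each edge $e=(u,v)$ I will establish two complementary facts about the arrays computed by \HandleVertex{} and \HandleEdge{}: \emph{soundness}, that every finite value $B[v,i]$ (resp. $B[e,i]$) is actually \emph{realizable} by a valid set of trajectories that fully explores $T_v$ and leaves a net balance of $i$ agents carrying exactly that surplus energy at $v$ (resp. at $u$ across $e$); and \emph{optimality}, that no valid exploration of $T_v$ can leave a net balance of $i$ agents with strictly more surplus. Taken together these say the computed arrays are exactly the extremal quantities defined in the text, and the original tree is explorable if and only if $B[\rho,i]\ge 0$ for some admissible $i$ at the root $\rho$, since on balance no agent or energy ever needs to leave the whole tree.

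For the \textbf{base case} I take a leaf $v$ holding $a_v$ agents with total energy $e_v$; as $T_v$ has no edges to cover and all agents are co-located, energy sharing lets one agent carry everything, so $B[v,i]=e_v$ for every admissible $i\le a_v$. For the \textbf{inductive step on an edge} $e=(u,v)$ of weight $w_e$, assuming $B[v,\cdot]$ is correct, I argue that the only relevant use of $e$ is to move the net balance across it while covering $e$ at least once: if $i>0$ agents leave upward they explore $e$ and the cheapest realization spends $i\,w_e$, giving $B[e,i]=B[v,i]-i\,w_e$; if $i<0$ agents descend they likewise cover $e$ at cost $|i|\,w_e$; and if $i=0$ a single round trip of cost $2w_e$ is forced to cover $e$, giving $B[e,0]=B[v,0]-2w_e$. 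The $\max$-shorthand in \HandleEdge{} then records the best option. For the \textbf{inductive step on a vertex} $u$ with child edges $e_1,e_2$ (and, after preprocessing, no agents of its own), every agent and every unit of energy arriving at $u$ can be freely pooled and re-routed, so the optimal summary is the max-plus convolution $B[u,i]=\max_{i_1+i_2=i}\bigl(B[e_1,i_1]+B[e_2,i_2]\bigr)$ over admissible splits, which is exactly what \HandleVertex{} computes.

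The soundness direction also requires exhibiting a \emph{schedulable} family of trajectories, not merely a conserved energy account. Here I would reuse the device from the proof of Theorem~\ref{th:linesegment1}: process transfers in rounds so that each agent first receives any energy destined for it (at a node, or at the meeting point forced by the net balance on an edge) before it is required to spend energy, and observe that the bottom-up decomposition induced by the DP imposes a consistent partial order on these transfers, so a valid timeline exists. I would also invoke the canonicalization of Lemma~\ref{lemma:l1}, generalized from the segment to $T$, to justify assuming that no agent crosses an edge more often than its net balance demands and that doubly covered portions never exceed singly covered ones.

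The \textbf{running time} follows by bookkeeping: preprocessing is $O(n+k)$; each admissible balance $i$ at a vertex lies in an interval of width $O(k)$, so every array $B[v,\cdot]$ and $B[e,\cdot]$ has size $O(k)$; \HandleEdge{} only shifts such an array and costs $O(k)$, while the convolution in \HandleVertex{} costs $O(k^2)$; since the preprocessed tree is a proper binary tree with $O(\ell)$ branching vertices and $O(\ell)$ edges (a routine leaf count), the total is $O(n+\ell k^2)$. The \textbf{main obstacle} is the optimality half of the induction: turning an \emph{arbitrary} valid exploration into one that respects the per-edge net-balance summary. This is an exchange argument — redundant back-and-forth crossings of an edge must be eliminated, energy transfers relocated to the pooling vertex and rescheduled, and one must verify that these rewrites preserve both full coverage and schedulability while never decreasing the surplus recorded across each cut. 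Making this decomposition precise at internal vertices, where agents and energy from the two subtrees interact, is the delicate part; the path lemmas supply the template, but the branching is what must be handled with care.
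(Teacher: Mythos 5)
There is a genuine gap, and it sits exactly where the paper does its real work: the edge step. Your rule ``$B[e,i]=B[v,i]-i\,w_e$ for $i>0$, $-|i|\,w_e$ for $i<0$, $-2w_e$ for $i=0$'' treats the energy account as if surplus or deficit could be moved across $e$ for free once the net number of crossings is paid for. It cannot: energy is carried by agents and handed over only at meeting points. If $B[v,i]>0$ but $i\le 0$, the surplus at $v$ is useless at $u$ unless an agent from $T_v$ climbs the edge (and returns, to preserve the balance), costing up to $2w_e$ extra, or meets the descending agents partway (giving the fractional formula $i(w_e-B[v,i]/(2+|i|))$ of Case 1b). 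If $B[v,i]\le 0$ and $i\ge 0$, an agent must descend from $u$ to deliver the deficit and come back, so the cost is $(i+2)w_e$, not $i\,w_e$. And, crucially, \HandleEdge{} is not a pointwise map on the index $i$: an agent may spend its last energy partway up $e$ and terminate there, which \emph{shifts} the balance, so $B[v,1]$ feeds into $B[e,0]$ and $B[e,-1]$, and $B[v,0]$ into $B[e,-1]$ (Cases 2b, 3b, 3b'). The paper's proof of Lemma~\ref{lm:Be} first argues that at most one agent per direction may terminate inside $e$ and only when no agent fully crosses in that direction, and then runs the case analysis above; none of this is recoverable from your three formulas, and a DP built on them returns wrong values (e.g.\ a leaf agent with energy $0.5$ below an edge of weight $1$ gives your $B[e,-1]=-0.5$, which is not realizable). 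Your vertex step (max-plus convolution) and the feasibility/scheduling discussion do match the paper's Lemma~\ref{lm:Bv}, though you do not isolate the one delicate scheduling case there (positive $B[e'',i'']$ lending an agent to rescue a deficit subtree $T_{e'}$ before returning).

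The running-time bookkeeping also has a flaw: the preprocessed binary tree has $O(\ell+k)$ internal nodes, not $O(\ell)$, because stage (a) creates a new internal node for every agent-bearing non-leaf of the original tree. Charging $O(k^2)$ per convolution at all of them gives $O((\ell+k)k^2)$, which exceeds the claimed bound. The paper avoids this by observing that each of those $O(k)$ extra internal nodes has a leaf child whose $B$-array is constant, so its convolution degenerates to an $O(k)$ sliding maximum, leaving the full $O(k^2)$ cost only for the $O(\ell)$ genuine branching nodes.
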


\vspace*{-3mm}
\begin{algorithm}
\caption{\textsc{HandleVertex}($v$)}
\label{alg-handleVertex}
{\normalsize
\begin{algorithmic}[1]
\If {$v$ is a leaf}
    \For {$i=-k+a_v$ to $a_v$}
       \State $B[v,i] \gets e_v$ \label{ln:leafs}
    \EndFor
\Else \Comment{$v$ has two child edges $e'$ and $e''$}
    \State wait until $\D[e']$ and $\D[e'']$
    \For {$i= -k+a_v$ to $a_v$} \label{ln:2ndfor}
        \State $B[v,i] \gets \max_{i'+i''=i} (B[e', i']+B[e'', i''])$ \label{ln:combine}
    \EndFor
\EndIf
%\State \Comment{This while loop is probably not necessary, but maybe makes proofs simpler}
%\While {there exist $i>0$ and $j>i$ such that $B[v,i]<B[v,j]$}
%    \State $B[v,i] \gets B[v,j]$
%\EndWhile
\State $\D[v] \gets $\textbf{true}
\If {$v$ is the root}
    \If {there is $B[v,i]\geq 0$ with $i\geq 0$}
        \State return \textbf{solvable}
    \Else
        \State return \textbf{not solvable}
    \EndIf
\EndIf
\end{algorithmic}
}
\end{algorithm}
%\vspace*{-3mm}

%\vspace*{-3mm}
\begin{algorithm}[h]
\caption{\textsc{HandleEdge}($e$)}
\label{alg-handleEdge}
{\normalsize
\begin{algorithmic}[1]
\Require $e=(u,v)$ where $v$ is a child of $u$
\State wait until $\D[v]$
\For{$i=-k+a_v$ to $a_v$}
    \If {$B[v,i]\leq 0$}
        \If {$i<0$}
            \State $B[e,i] \gets B[v,i] - |i|w_e$  \label{ln:a} \Comment{Case 1a}
        \Else
            \State $B[e,i] \gets B[v,i] - (i+2)w_e$  \label{ln:d} \Comment{Cases 2a, 3a and 4a}
        \EndIf
    \ElsIf {$i\leq 0$} \Comment{$B[v,i] > 0$}
        \If {$B[v,i] > (2+|i|)w_e$}
            \State $B[e,i] \gets B[v,i]-(2+|i|)w_e$ \label{ln:c} \Comment{Cases 1c and 2c}
        \ElsIf {$i<0$}
            \State $B[e,i] \gets  i(w_e-B[v,i]/(2+|i|))$ \label{ln:b} \Comment{Case 1b}
        \Else \Comment{$i=0$}
            \State $B[e,-1] \gets (B[v,0]-2w_e)/2$ \label{ln:e} \Comment{Case 2b}
            \State $B[e, 0] \gets B[v,0]-2w_e$ \label{ln:e'}\Comment{Case 2b'}
        \EndIf
    \Else \Comment{$i>0$ and $B[v,i]>0$}
        \If {$B[v,i] \geq iw_e$}
            \State $B[e,i] \gets B[v,i]- iw_e$  \label{ln:i} \Comment{Cases 3c and 4c}
        \Else \Comment{$i>0$ and $B[v,i]<iw_e$}
            \State $B[e,i] \gets -(i+2)(w_e-B[v,i]/i)$  \label{ln:k} \Comment{Cases 3b'' and 4b}
            \If {$i=1$}
                \State $B[e,-1] \gets B[v,1]-w_e$ \label{ln:h} \Comment{Case 3b}
                \State $B[e, 0] \gets 2(B[v,1]-w_e)$ \label{ln:h'} \Comment{Case 3b'}
            \EndIf
        \EndIf
    \EndIf
\EndFor
\State $\D[e] \gets $\textbf{true}
\end{algorithmic}
}
\end{algorithm}
\vspace*{-3mm}

The proof of Theorem~\ref{thm:trees} hinges on an inductive argument that shows that the $B[]$ values are
correctly computed in a bottom-up manner starting at the leaves and working our way up the tree.
The base case for the induction is for the leaf nodes, and follows directly from the construction (line~\ref{ln:leafs} of \HandleVertex: all the energy in a leaf node is surplus and can be utilized to explore the rest of the tree).

%\begin{lemma}\label{lm:B}
%The values $B[v,*]$ and $B[e,*]$ are correctly computed for all vertices $v$ and all edges $e$ in the tree.
%Furthermore, the algorithm constructs the schedule of agent movements and energy transfer that witnesses the achieved $B[]$.
%\end{lemma}

%We prove this Lemma by bottom-up induction in $T$, combining the inductive steps over edges and vertices. The use of the $\D[]$ array ensures that the computation proceeds in that manner as well, so that the application of the induction hypothesis is correct.
%

Our induction hypothesis is established by proving two concomitant lemmas, the proofs of which are in the appendix:
\begin{lemma}\label{lm:Be}
Let $e=(u,v)$ be an edge in $T$ with $u$ being the parent of $v$. If the values $B[v,*]$ have been correctly computed, then procedure \HandleEdge ~correctly computes $B[e,*]$, where $*$ stands for all relevant values of $i$.
\end{lemma}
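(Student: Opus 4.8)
The plan is to prove the lemma by a case analysis that mirrors exactly the branches of \HandleEdge, reading each formula as the value of an optimal \emph{crossing plan} for the edge $e=(u,v)$. First I would fix the operational meaning of $B[e,i]$: among all valid schedules that fully explore $T_v$ together with the edge $e$ and realize a net balance of $i$ agents transiting to $u$'s side, it is the maximum surplus energy made available at $u$. The backbone of the argument is an optimal-substructure observation. Any such schedule restricts to an exploration of $T_v$ achieving some balance $i'$ at $v$ with surplus at most $B[v,i']$ (by the lemma's hypothesis that $B[v,*]$ is correct), followed by a plan that uses only the edge $e$ and transforms the pair $(i',B[v,i'])$ into $(i,\text{surplus at }u)$; re-entering the already-explored $T_v$ can be assumed away without loss, just as in the canonical simplifications of Lemma~\ref{lemma:l1}. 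Hence it suffices to determine, for each source pair $(i',B[v,i'])$, the optimal edge-crossing plans and the target balances $i$ they reach. The convention that $B[e,i]\gets y$ abbreviates $B[e,i]\gets\max(B[e,i],y)$ then guarantees that taking the best contribution over all sources produces the correct $B[e,i]$.

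Next I would dispatch the routine cases by a crossing count. Because energy can be carried \emph{into} $T_v$ only by an agent traversing $u\to v$, a deficit ($B[v,i']\le 0$) forces at least one entering traversal; this pins the minimum number of traversals of $e$ at $|i'|$ when $i'<0$ and at $i'+2$ when $i'\ge 0$, which yields lines~\ref{ln:a} and~\ref{ln:d} after subtracting the crossing cost and the deficit supplied from $u$. Symmetrically, when the surplus exceeds the total crossing cost, every agent can pay for its own traversal from the pooled energy and the surplus at $u$ is simply $B[v,i']$ minus that cost, giving lines~\ref{ln:c} and~\ref{ln:i}. For each such case I would verify both achievability (exhibit the schedule, using energy sharing to make the individual crossings feasible from the combined surplus) and optimality (the crossing count is forced, so no plan does better), and confirm that edge $e$ is explored since at least one full traversal occurs.

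The crux, and the step I expect to be the main obstacle, is the family of small-surplus cases (lines~\ref{ln:b},~\ref{ln:e},~\ref{ln:e'},~\ref{ln:k},~\ref{ln:h},~\ref{ln:h'}), where $B[v,i']>0$ is too small to pay for the traversals outright. Here a traversal can no longer be treated as an indivisible cost of $w_e$: the optimal plan has agents stop \emph{inside} $e$ and transfer energy at a meeting point, so the scarce surplus is spread to subsidize several partial traversals simultaneously. I would argue that the best plan subsidizes each of the $2+|i'|$ (respectively $i'$) relevant traversals equally, which is precisely where the factors of the form $B[v,i']/(2+|i'|)$ and $B[v,i']/i'$ come from, proving this by matching an energy-conservation lower bound against an explicit meeting schedule that attains it --- the same singly-covered/doubly-covered trade-off used for the path in Lemma~\ref{lemma:l1}. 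A second delicate point is that it can be strictly better to deliver a \emph{different} balance at $u$ than the source balance at $v$: pushing one extra agent in (balance $-1$) or keeping balance $0$ changes how the surplus is amortized, which is why a single source value $B[v,0]$ (respectively $B[v,1]$) feeds two targets $B[e,-1]$ and $B[e,0]$. I would show each variant schedule is valid and computes the claimed value, and that together with the maximum over all source balances they exhaust the optimal plans. Establishing that no unlisted mixture of partial traversals and rebalancing beats the enumerated formulas --- i.e.\ closing the optimality direction in these small-surplus cases --- is where the real work lies.
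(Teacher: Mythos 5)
Your overall strategy is the same as the paper's: interpret each branch of \HandleEdge\ as the value of an optimal plan for crossing $e$, split into the routine cases (a deficit forces $|i|$ or $i+2$ full traversals; a large surplus pays for $i$ or $2+|i|$ traversals outright) and the small-surplus cases where agents meet and terminate inside $e$, and rely on the max-assignment convention to combine contributions from several source balances into one target balance. Your identification of the cross-balance feeds ($B[v,0]$ and $B[v,1]$ each contributing to both $B[e,-1]$ and $B[e,0]$) matches lines~\ref{ln:e}, \ref{ln:e'}, \ref{ln:h} and \ref{ln:h'}, and your equal-subsidy derivation of the factors $B[v,i]/(2+|i|)$ and $B[v,i]/i$ is the right computation for each individual meeting-point pattern.

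The one step you explicitly leave open is precisely the step the paper supplies, and without it the argument is incomplete: you must show that the configurations you enumerate are the \emph{only} candidates for optimality. The paper does this with a separate structural elimination (its Figure~\ref{fig:edgeOptimization}): an agent that terminates inside $e$ while travelling in the same direction as an agent that fully crosses $e$ is pure waste; two or more agents travelling in the same direction and all terminating inside $e$ can be replaced by one; and an agent that fully crosses $e$ and then turns back to terminate inside $e$ gains nothing. The conclusion is that in each direction at most one agent terminates inside $e$, and only when no agent fully crosses $e$ in that direction. This is what collapses the ``unlisted mixtures of partial traversals and rebalancing'' you worry about down to the three meeting-point patterns handled at lines~\ref{ln:h}, \ref{ln:h'} and \ref{ln:e} (plus the symmetric case~\ref{ln:b} and line~\ref{ln:k}). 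The per-pattern energy-conservation matching you describe is sound, but it is this elimination argument --- not the per-pattern optimization --- that closes the optimality direction you correctly flag as the real work; as written, your proposal names the obstacle without removing it.
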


\begin{lemma}\label{lm:Bv}
Let $v$ be an internal vertex with two child edges $e'$ and $e''$. If $B[e',*]$ and $B[e'',*]$ have been correctly computed, then procedure \HandleVertex ~correctly computes $B[v,*]$ where $*$ stands for all relevant values of $i$.
\end{lemma}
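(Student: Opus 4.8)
The plan is to prove Lemma~\ref{lm:Bv} by verifying that the combine formula on line~\ref{ln:combine} of \HandleVertex, namely $B[v,i]=\max_{i'+i''=i}(B[e',i']+B[e'',i''])$, holds as a two-sided inequality. The underlying structural fact is that the subtree $T_v$ consists of exactly the two child edges $e'=(v,v')$ and $e''=(v,v'')$ together with their subtrees, glued at the single common node $v$; moreover, since $v$ carries no agents after preprocessing, we have $a_v=a_{v'}+a_{v''}$ and $e_v=e_{v'}+e_{v''}$. Crucially, every interaction between the two sides---every agent crossing from one side to the other and every energy transfer between the sides---must take place at $v$, so at $v$ both energy and agents pool freely. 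This is what makes the surplus energy and the net agent balance leaving $T_v$ equal to the algebraic sums of the two sides' contributions, and it is exactly what the combine formula asserts. Recall also that, in the edge definition $e=(u,v)$, for a child edge $e'$ the parent role is played by our node $v$, so $B[e',i']$ is precisely the surplus delivered \emph{to $v$} across $e'$.

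For the achievability direction ($\geq$), I would fix any $i',i''$ with $i'+i''=i$ attaining the maximum and show that the two witness schedules for $B[e',i']$ and $B[e'',i'']$ can be merged into a valid exploration of $T_v$ with balance $i$ and surplus $B[e',i']+B[e'',i'']$ delivered at $v$. The merge is legitimate because all cross-side events occur at $v$: I pool the energies arriving at $v$ from the two sides (using the meeting at $v$ for any required transfer), and I redistribute agents at $v$ so that, on balance, $i'$ agents come up from $e'$ and $i''$ from $e''$, with $i$ of them ultimately available to leave $T_v$ upward. Since the only shared resource point is $v$, the two schedules can be interleaved on a single timeline without conflict, which yields $B[v,i]\geq\max_{i'+i''=i}(B[e',i']+B[e'',i''])$.

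For the optimality direction ($\leq$), I would start from an arbitrary valid exploration of $T_v$ that fully explores $T_v$ and leaves surplus $E$ at $v$ with net balance $i$, and cut it at $v$. Let $i'$ (resp.\ $i''$) be the net number of agents that cross $e'$ (resp.\ $e''$) toward $v$ over the whole execution; by conservation of agents at $v$ we get $i'+i''=i$. Restricting the execution to the $e'$-side yields a feasible exploration of $T_{v'}$ together with edge $e'$ that delivers some surplus $E'$ to $v$ with balance $i'$, and since $B[e',*]$ is correct by hypothesis, $E'\leq B[e',i']$; symmetrically $E''\leq B[e'',i'']$. Because energy is conserved and the only coupling is at $v$, $E=E'+E''\leq B[e',i']+B[e'',i'']\leq\max_{i'+i''=i}(B[e',i']+B[e'',i''])$, giving $B[v,i]\leq\max_{i'+i''=i}(B[e',i']+B[e'',i''])$. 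Together with the index identity $a_v=a_{v'}+a_{v''}$---which guarantees that the loop range $[-k+a_v,\,a_v]$ on line~\ref{ln:2ndfor} is exactly the set of achievable balances and that the admissible splits $i'+i''=i$ respect each child's own range---the two inequalities establish the lemma.

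I expect the main obstacle to be making the cut at $v$ in the optimality direction fully rigorous: an arbitrary strategy may send agents and energy back and forth across $v$ many times in both directions, and I must argue that only the net flow across each child edge matters and that this net flow, together with the induced subtree activity, is itself a feasible single-edge configuration counted by $B[e',\cdot]$ or $B[e'',\cdot]$. The key to handling this is precisely that $v$ is a single point: all cross-side transfers are co-located, so the timeline of transfer events at $v$ can be reordered and the surpluses or deficits of the two sides aggregated, reducing any complex interleaving to two independent, net-flow-determined sub-executions. Once this reduction is in place, correctness of $B[e',*]$ and $B[e'',*]$ supplies the required bounds and the lemma follows.
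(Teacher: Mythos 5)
Your overall decomposition --- proving the combine formula of line~\ref{ln:combine} as a two-sided inequality, with a cut-at-$v$ argument for optimality and a merge argument for achievability --- matches the structure of the paper's proof, and your optimality direction is, if anything, more detailed than the paper's one-sentence justification of it. The genuine gap is in the achievability ($\geq$) direction, and it is exactly the point where the paper spends its effort. Your claim that ``since the only shared resource point is $v$, the two schedules can be interleaved on a single timeline without conflict'' is unjustified and fails in general: co-location of all cross-side events at $v$ does not resolve the \emph{temporal} circular dependency that can arise between the two witness schedules. Concretely, suppose $i\geq 0$ and $B[v,i]\geq 0$ (so no help can come from above $v$), while the optimal split has $i'>0$ with $B[e',i']<0$ (side $e'$ releases agents but needs energy injected at $v$ to be explored) and $i''<0$ with $B[e'',i'']>0$ (side $e''$ releases energy but needs agents injected at $v$). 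Then the witness schedule for $e'$ cannot run to completion without energy that only side $e''$ can provide, and the witness schedule for $e''$ cannot run to completion without agents that only side $e'$ can provide; treating the two schedules as black boxes, no valid interleaving is guaranteed to exist.

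The paper closes precisely this case by appealing to the internal structure of the schedule behind case 1c of \HandleEdge\ (line~\ref{ln:c}): when $i''\le 0$ and $B[e'',i'']>0$, that value is achieved by a schedule in which an agent \emph{first} carries the surplus out of $T_{e''}$ to $v$ (this is what remark (i) and the $2w_e$ round-trip term in that case encode), this energy then funds the exploration of $T_{e'}$, and the agent afterwards re-enters $T_{e''}$ together with the agents released from $T_{e'}$ --- at no energy cost beyond what $B[e'',i'']$ already accounts for. In other words, feasibility of the merged schedule is not a generic consequence of pooling resources at $v$; it requires this specific ``energy out first, agents in later'' witness, whose existence comes from how \HandleEdge\ computes $B[e'',i'']$ (Lemma~\ref{lm:Be}). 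Your proof needs to identify this deadlock case and resolve it by such an argument; without that, the $\geq$ direction does not go through.
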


It is easy to see after the $O(n+k)$ time preprocessing step to convert the original tree into a full binary tree, each leaf of the tree and subsequently, each edge of the tree can be processed in $O(k)$ time. To obtain the $B[]$ values for any internal node, we need to combine the $k$-vectors associated with the child edges (in step~\ref{ln:combine} of \HandleVertex). Observe that there are two classes of internal nodes in the converted tree. The first class correspond to $O(k)$ nodes that contained a subset of agents in the original tree. Each such node, generated in stage (a) of the conversion phase, has a child that is a leaf in the converted tree (containing the same subset of agents). The second class contains the internal nodes generated in stage (b) of the conversion phase, as well as the nodes that were of degree at least 3 in the original tree, and the root obtained in stage (d). Observe that there are at most $\ell$ nodes in the second class. Let $v$ be an internal node of the first class and let $v'$ be its child that was added in preprocessing that has taken the $v$'s agents. By construction, all the $B[v', *]$ values are equal to $e_{v'}$, and since the weight of $e'$ is $0$, $B[e', *] = e_{v'}$ as well. Hence, for such a node $v$ we have $B[v,a_{e'}+a_{e''}-i] = e_{v'}+ \max_{j=0}^i B[e'', a_{e''}-i]$ which can be for $i=0$ to $k$ computed in time $O(k)$. Consequently, the overall complexity of \textsc{HandleVertex} called for all $O(k)$ internal nodes of class one is $O(k^2)$. On the other hand, for each of $O(\ell)$ internal nodes from class two, the max function from step~\ref{ln:combine} of \textsc{HandleVertex} may be computed in $O(k)$ time, which leads to $O(\ell  k^2)$ overall complexity of all the calls of the \textsc{HandleVertex} procedure for nodes of the second class.   Applying Lemma~\ref{lm:Bv} to the root of the tree, it is clear that the algorithm correctly decides whether or not the binary tree can be explored fully, and the computation takes  $O(\ell k^2)$  time after the preprocessing step. 

We further remark that if exploration is indeed feasible then in the same time complexity, standard post-processing, top-down techniques can be used to recover the trajectories (and the schedules) for the agents from the computed $B[]$ values by combining the schedules locally computed at each vertex and edge. The choice of the root vertex is arbitrary (after preprocessing) and does not influence the decision outcome -- however, it does influence the computed schedule and the amount of energy left in the root, and there may be multiple feasible solutions (one for each $i \geq 0$ in the root's $B[]$ values). 
This completes the proof of Theorem~\ref{thm:trees}.

%Applying Lemma~\ref{lm:B} to the root vertex yields the correctness of the whole algorithm: If the answer is {\em yes}, the algorithm constructs the desired schedule (by combining the schedules locally computed at each vertex and edge). If the answer is {\em no}, there indeed is no such schedule -- from the optimality of $B[]$ it follows that it would have been found. Observe that the choice of the root vertex is arbitrary and does not influence the yes/no outcome -- however, it does influence the computed schedule and the amount of energy left in the root.
%
%Let $m$ denote the number of edges of $T$, $f$ is the number of its leafs plus the number of vertices initially containing an agent and $k$ be the number of agents. As the cost of \HandleEdge is $O(k)$ while the cost of \HandleVertex is $O(k^2)$, combining with Lemma~\ref{lm:B} yields the main theorem of this section:
%
%{\em Is the cost $O(k^2)$? This does not appear in the theorem statement below.}
%
%\begin{theorem}\label{thm:trees}
%There exists an $O(km+kf^2)$ algorithm for trees to decide whether the given energy and agent distribution allows the tree to be fully explored.
%\end{theorem}

%\hl{To Do!}
%\begin{enumerate}
%\item \hl{Evangelos} Energy optimal for Eulerian
%\item \hl{Evangelos} Competitive ratio at most $2$ for arbitrary graphs
%\item \hl{Evangelos} There exit star graphs with CR at least $2 - \frac 1{2k}$
%\item \hl{Stefan} NP-complete for $3$-regular graphs: reduction from Hamiltonian path
%\end{enumerate}

\section{General Graphs}
\label{sec:General Graphs}

Unfortunately, while the exploration problem for segments and trees admits very efficient solutions, for general graphs, exploration becomes intractable (unless P=NP). Indeed, we show that the graph exploration problem is NP-hard even in the case of 3-regular graphs by using a reduction from the Hamiltonian cycle problem. We also give an approximation algorithm.
%We use a reduction from the Hamiltonian circuit.
\subsection{NP hardness for 3-regular graphs}
Let $G$ be a 3-regular graph on $n$ nodes.
We construct a graph $M$ by replacing each edge $e=(u,v)$ of $G$ by a meta-edge gadget $m(u,v)$ from Figure~\ref{fig:edgeGatget} case a), where $a$ and $b$ are chosen so that $a > 5nb$ where $n$ is the number of vertices of $G$. In addition, each meta-vertex (i.e., an image $m(v)$ of a vertex $v$ of $G$) starts with one agent and $3a+5b$ energy.

Note that the overall energy is $(3a+5b)n$, while the overall weight of the edges ($3n/2$ of them in a $3$-regular graphs) is $3n(a+b)$. Hence, a length at most $2bn$ can be crossed twice.  As $a>5bn$, this means no $a$-edge is crossed twice and at most $2n$ of $b$-edges are crossed twice.

\begin{figure}[htb]
\centerline{\includegraphics[width=5in]{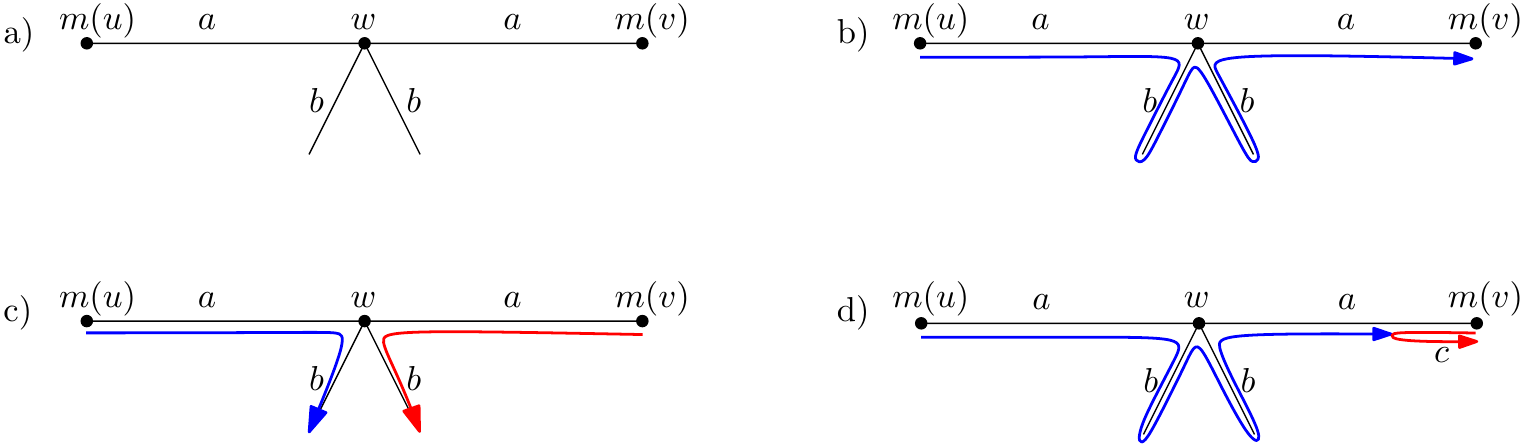}}
\caption{a) the meta-edge gadget, b) covering gadget using one agent c) efficient covering of the gadget using two agents d) covering gadget using two agents so that at least one agent exits the gadget}
\label{fig:edgeGatget}
\end{figure}

\begin{lemma}\label{lm:1agent}
If only one agent $x$ enters (w.l.o.g. from $m(u)$) a meta-edge $e = m(u,v)$, a total of $2a+4b$ energy is spent ensuring $e$ is fully explored. In such case $x$ ends up in $m(v)$.
\end{lemma}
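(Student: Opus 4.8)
The plan is to combine a local lower bound on the length of a lone agent's walk inside the gadget with the global fact, established above, that no weight-$a$ edge is ever traversed twice (since $a>5nb$ while only $2bn$ total double-coverage is affordable). Recall the structure of the meta-edge $m(u,v)$ in Figure~\ref{fig:edgeGatget}a): it attaches to $m(u)$ and to $m(v)$ through two weight-$a$ edges, and its two weight-$b$ edges are bridges, each the unique link to a dead-end sub-gadget whose remaining internal edges have weight $0$; the total weight of the gadget is therefore $2a+2b$. Since the gadget is symmetric under exchanging $u$ and $v$, assuming $x$ enters from $m(u)$ is without loss of generality.

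First I would establish the lower bound. Because $x$ is the only agent in $e$ and every edge of $e$ must be covered, the part of $x$'s trajectory lying in $e$ is a single walk covering all gadget edges. Each weight-$b$ bridge is a cut edge separating its dead-end sub-gadget from the rest of the gadget, so any walk that enters the sub-gadget to cover its edges and subsequently leaves it must cross that bridge an even number of times, hence at least twice; this contributes at least $2b$ per bridge, for a total of at least $4b$ on the weight-$b$ edges. Adding the two weight-$a$ edges, each of which must be crossed at least once, the walk has length at least $2a+4b$.

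Next I would determine the endpoint and confirm tightness. By the global bound no weight-$a$ edge is crossed twice, so $x$ crosses each of the two weight-$a$ edges exactly once. Entering $e$ means crossing the $m(u)$-side weight-$a$ edge once on the way in; crossing the $m(v)$-side weight-$a$ edge exactly once then necessarily leaves $x$ on the far side, i.e.\ in $m(v)$. (Returning to $m(u)$ would force an even number of crossings of the $m(v)$-side edge, hence $0$---leaving it uncovered---or $2$---double-covering a weight-$a$ edge, both impossible.) The matching upper bound is given by the explicit route of Figure~\ref{fig:edgeGatget}b): cross in ($a$), detour into the first dead-end and back while sweeping its weight-$0$ edges for free ($2b$), move to the second junction along weight-$0$ internal edges, detour into the second dead-end and back ($2b$), and cross out to $m(v)$ ($a$), spending exactly $2a+4b$.

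I expect the endpoint/parity step to be the only delicate point: one must argue that covering each dead-end forces an \emph{even} number of bridge crossings and then invoke the global no-double-$a$-edge fact to rule out a return to $m(u)$, rather than settling for the weaker observation that returning is merely costlier. The dead-end lower bound and the explicit upper bound are routine once the cut-edge structure of Figure~\ref{fig:edgeGatget}a) is made explicit.
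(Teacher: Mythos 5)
Your proof is correct and follows essentially the same route as the paper's: the global bound $a>5nb$ forces each weight-$a$ edge to be crossed exactly once (so $x$ must exit at $m(v)$), and the weight-$b$ whiskers, being dead ends, must each be traversed twice before that crossing, giving $2a+4b$. Your guessed internal structure of the gadget (weight-$0$ edges, two separate junctions) differs slightly from the paper's single middle vertex $w$ with two pendant whiskers, but this has no effect on the argument.
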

\begin{proof}
Consult Figure~\ref{fig:edgeGatget}, case b). Let us call the edges of weigh $b$ starting at $w$ {\em whiskers}. As the agent has to cover both $(m(u),w)$ and $(w, m(v))$, since there is not sufficient energy to cover either of these edges twice, the agent must cover the whiskers before arriving at $m(v)$. Hence, both whiskers must be traversed twice, for the total energy expenditure of at least $2a+4b$, which the agent must have had at the moment it left $m(u)$.
\qed
\end{proof}

\begin{lemma}\label{lm:2agents}
Assume two agents $x$ and $y$ enter a meta-edge $e=m(u,v)$ from $m(u)$ and $m(v)$, respectively, and ensure $e$ is fully explored. If no agent leaves $e$ then the total energy spent in $e$ is at least $2a+2b$. If one or both agents leave $e$ then all the leaving agents leave to the same meta-vertex, and the total energy spent in $e$ is more than $2a+4b$.
\end{lemma}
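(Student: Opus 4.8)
The plan is to first pin down the shape of the gadget $m(u,v)$, which is implicit in the proof of Lemma~\ref{lm:1agent}: a central vertex $w$ joined to $m(u)$ and to $m(v)$ by two edges of weight $a$ each, together with two pendant whiskers of weight $b$ attached at $w$, so the total weight of $e$ is $2a+2b$. With this in hand the first claim is immediate: since each of the four edges of $e$ must be traversed at least once, the total distance travelled inside $e$ (which equals the total energy spent there) is at least the total weight $2a+2b$, and that is all that is asserted for the case where no agent leaves.

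For the case where at least one agent leaves, I would let $L\in\{1,2\}$ denote the number of leaving agents and bound the energy spent on the two $a$-edges and on the two whiskers separately, each as a function of $L$. For the whiskers I use a parity argument: a whisker is a degree-one (pendant) edge, so in the union of the two trajectories it is traversed an even number of times unless a trajectory ends at its tip; hence each whisker costs $2b$ unless an agent ends there, in which case it may cost only $b$. An agent can end at a whisker tip only if it stays inside $e$, i.e.\ does not leave, and there are exactly $2-L$ such agents, so the whisker energy is at least $(4-(2-L))b=(2+L)b$. For the $a$-edges I use the parity of crossings across the two cuts separating $m(u)$ and $m(v)$ from the interior: coverage already forces each $a$-edge to be crossed at least once (total $2a$), and every agent that exits to a meta-vertex re-crosses an $a$-edge, which the fixed endpoint-parities force to be a genuinely additional crossing; carrying this out gives $a$-energy at least $(2+L)a$. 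Combining, the energy spent in $e$ is at least $(2+L)(a+b)\ge 3(a+b)=3a+3b$, and since $a>b$ (indeed $a>5nb$) this is strictly larger than $2a+4b$, which establishes the energy claim for both $L=1$ and $L=2$.

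I expect the structural claim -- that all leaving agents exit to the \emph{same} meta-vertex -- to be the real obstacle, precisely because the energy computation above does not distinguish exit vertices: the ``swap'' pattern (each agent exiting at the opposite meta-vertex) and the ``each returns to its own side'' pattern both cost the same $(2+L)(a+b)$ as the common-vertex pattern, so the exit side cannot be determined by a cost comparison alone. To handle it I would invoke the standard energy-sharing normal form: whenever the two agents would meet moving toward one another, we may instead have them transfer energy and reverse, so without loss of generality the two trajectories never pass through each other. Because the gadget is a tree whose only branching point is $w$ and the two agents start on opposite sides of $w$, this non-crossing condition forbids one agent transiting to the far meta-vertex while the other transits in the opposite direction, and a short case analysis of the resulting ordered configurations then forces any genuinely departing agent onto a common side. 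The delicate points I anticipate are formalising ``non-crossing'' correctly in the presence of the two whiskers hanging off the branch point $w$, and being careful about what counts as ``leaving'' (an agent that merely returns to the meta-vertex from which it entered versus one that transits across), since that distinction is exactly what makes the common-vertex statement true; the two energy inequalities, by contrast, are routine once the parity observations are set up.
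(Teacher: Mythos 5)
Your identification of the gadget and the first claim are fine, and your parity bound on the whiskers is sound as far as it goes. But the second half of the argument has two genuine gaps, and both trace back to the same omission: the paper's proof of this lemma is not local to the gadget --- it crucially invokes the \emph{global} energy budget established just after the construction (total energy exceeds total weight by only $2bn < a$, so no $a$-edge is ever fully double-crossed in any valid exploration). Your proposal never uses this, and without it the claims cannot be recovered. Concretely, your $a$-edge bound of $(2+L)a$ is false: an agent that \emph{transits} the gadget (enters at $m(u)$, exits at $m(v)$) has odd parity at both boundary cuts and crosses each $a$-edge exactly once --- these are the coverage crossings, not ``genuinely additional'' ones. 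Take $x$ transiting $m(u)\to m(v)$ while double-covering both whiskers at $w$, and $y$ entering a distance $\epsilon$ from $m(v)$ and terminating: the gadget is fully explored, $L=1$, and the cost is $2a+4b+\epsilon$, which is far below your claimed $3(a+b)$ since $a\gg b$. Even repairing the $a$-term to the correct $2a+\delta$ and keeping your whisker bound of $(2+L)b=3b$ yields only $2a+3b+\delta$, which does not reach $2a+4b$. The paper instead gets the full $4b$ for the whiskers by showing (via the global budget) that the stay-behind agent can never reach $w$ --- otherwise a full $a$-edge would be double-covered --- so the transiting agent must double-cover both whiskers itself, and the strict excess over $2a+4b$ comes from the overlap segment between the meeting point and $m(v)$.

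The structural claim suffers the same problem. Your non-crossing normalization converts the ``swap'' into the ``bounce,'' but the bounce --- each agent enters, turns around at or past $w$, and exits back where it came from --- is a perfectly valid, non-crossing, fully covering local configuration in which the two leaving agents exit at \emph{different} meta-vertices. No case analysis of ordered local configurations will eliminate it, because nothing local is wrong with it; it is excluded only because it double-covers a full $a$-edge, which the global slack of $2bn$ forbids. So the ``same meta-vertex'' conclusion, which you correctly flagged as the hard part, genuinely requires importing the system-wide energy argument (as the paper does with ``there is not enough energy in the whole system''), and the sketch as written would fail at exactly that step.
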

\begin{proof}
If each of $x$ and $y$ have at least energy $a$, they can both arrive at the middle vertex $w$. If their total energy is at least $2a+2b$, they can share energy at $w$ so that each can complete one whisker (see Figure~\ref{fig:edgeGatget}, case c)). In order for at least one of the agents to leave $e$ (w.l.o.g. assume via $m(v)$), without having a full double-crossing of the edge $(w, m(v))$, the two agents must meet inside $(w, m(v))$, with one or both of them going to $m(v)$. No agent can return to $w$, otherwise the whole $(w, m(v))$ would have been double covered, for which there is no enough energy in the whole system. This means both whiskers must have been covered by the agent originating in $m(u)$ at a full double-cost of $4b$, for a total of more than $2a+4b$ (the part between the meeting point and $m(v)$ must be traversed twice, raising the cost sharply above $2a+4b$).
\qed
\end{proof}

Note that the second case of Lemma~\ref{lm:2agents} is worse than Lemma~\ref{lm:1agent} in terms of total energy expenditure, due to the extra cost of $c$ (or $2c$). Still, it might be justified if the blue agent does not have enough energy for case b), or if the red agent does not have enough energy for case c).

Lets call a meta-edge {\em light} if it is covered by the first part of Lemma~\ref{lm:2agents} (Figure~\ref{fig:edgeGatget}, case c)), otherwise it is {\em heavy}. Observe that each light meta-edge consumes $2$ agents, while each heavy meta-edge consumes an excess of $2b$ energy compared to the weight of its edges. This yields:

\begin{lemma}\label{lm:heavy}
If there is an exploration strategy for the input graph, the number of heavy edges is exactly $n$.
\end{lemma}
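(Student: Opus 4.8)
The plan is to pin the number of heavy edges between $n$ and $n$ from two opposite directions: an energy-accounting argument giving $h \le n$ and an agent-counting argument giving $h \ge n$, where I write $h$ for the number of heavy meta-edges (so the number of light meta-edges is $3n/2 - h$, since every one of the $3n/2$ meta-edges is either light or heavy). Before either bound I would record why this dichotomy is exhaustive. The total surplus energy is $(3a+5b)n - 3n(a+b) = 2bn$, and since $a > 5bn$ this surplus is too small to double-cross any $a$-edge; as every $a$-edge must also be covered at least once, each of the two $a$-edges of a meta-edge is crossed \emph{exactly} once. In particular at most one agent enters a meta-edge from each side, so at most two agents ever enter it, and Lemmas~\ref{lm:1agent} and~\ref{lm:2agents} exhaust the cases.

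For the upper bound $h \le n$ I would charge surplus energy to heavy edges. The total distance travelled by all agents equals the total edge weight $3n(a+b)$ plus the total length traversed more than once, and this latter quantity is at most the surplus $2bn$ because energy is conserved and cannot be created. All such repeated traversal is confined to the whiskers of individual meta-edges (the $a$-edges are never recrossed), so it decomposes as a sum over meta-edges; by the computations behind Lemmas~\ref{lm:1agent} and~\ref{lm:2agents}, each heavy meta-edge contributes at least $2b$ to this sum while each light one contributes $0$. Hence $2bh \le 2bn$, i.e.\ $h \le n$.

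For the lower bound $h \ge n$ I would count agents trapped inside light edges. By definition a light edge is explored by two agents, one entering from each side, neither of which leaves; so at the end of the exploration two agents sit inside each light edge. Because the two $a$-edges of a light meta-edge are each crossed exactly once and, in this case, inward, no further agent can be inside and no trapped agent can have exited, so each light edge contains exactly two agents; moreover distinct light edges contain disjoint pairs, since at the end an agent lies inside at most one meta-edge. As there are only $n$ agents in total, the number of light edges $3n/2 - h$ is at most $n/2$, and rearranging gives $h \ge n$. Combining the two bounds yields $h = n$.

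The step I expect to require the most care is the lower bound — specifically, making airtight the claim that the trapped agents are genuinely distinct across light edges and that no extra agent can be parked inside one. This rests entirely on the earlier observation that every $a$-edge is traversed exactly once, which both caps the number of agents entering a meta-edge at two and fixes the inward direction of entry in the light case. The two inequalities themselves are immediate once the per-edge contributions ($2b$ of surplus per heavy edge, $2$ trapped agents per light edge) are in hand; it is this final bookkeeping about who is where at the end, rather than the arithmetic, that must be watertight.
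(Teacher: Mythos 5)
Your proposal is correct and follows essentially the same route as the paper: an energy-surplus bound of $2b$ per heavy meta-edge against the total surplus $2bn$ gives at most $n$ heavy edges, and the fact that each light meta-edge consumes two of the $n$ agents gives at most $n/2$ light edges, forcing exactly $n$ heavy edges out of $3n/2$. The additional care you take in justifying that the light/heavy dichotomy is exhaustive and that the trapped agent pairs are disjoint is a welcome elaboration of details the paper leaves implicit, but it does not change the argument.
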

\begin{proof}
As the total energy in the system exceeds the total weight of the edges by $2bn$ and each heavy meta-edge consumes $2b$ more energy than the weight of its edges, there can be at most $n$ heavy meta-edges. On the other hand, as there are only $n$ agents available in total and each light meta-edge consumes two agents, there can be at most $n/2$ light meta-edges. As the total number of meta-edges is $3n/2$, the number of heavy edges must be exactly $n$.
\qed
\end{proof}

By Lemma~\ref{lm:1agent} and the second part of Lemma~\ref{lm:2agents}, a heavy meta-edge is traversed in one direction -- lets call the halves {\em outgoing} and {\em incoming}.

Consider the directed graph $H=(V, E')$ formed by the heavy meta-edges, i.e., $e=(u,v)\in E'$ iff $e$ is a heavy meta-edge from $m(u)$ to $m(v)$.

\begin{lemma}\label{lm:incoming}
Each vertex of $H$ has at least one incoming edge.
\end{lemma}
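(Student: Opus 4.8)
The plan is to argue by contradiction: suppose some vertex $v$ of $H$ has in-degree $0$, i.e.\ every heavy meta-edge incident to $m(v)$ is outgoing, and show that then the three meta-edges at $m(v)$ (recall $G$ is $3$-regular, so $m(v)$ has exactly three incident meta-edges, each either light or outgoing-heavy) cannot all be explored.

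First I would show that under this assumption no agent ever enters $m(v)$ during the exploration. There are only three ways an agent could become present at $m(v)$ besides the single agent placed there initially: it arrives along an incoming heavy edge, it emerges from a light edge, or it comes back out of an outgoing heavy edge. The first is impossible by the in-degree-$0$ assumption. The second is ruled out because a light edge is covered as in the first part of Lemma~\ref{lm:2agents} (case c)), where the two agents meet at $w$, share energy, and each finishes one whisker --- both agents are consumed inside the edge and neither returns to a meta-vertex. The third is ruled out by the orientation convention together with Lemmas~\ref{lm:1agent} and~\ref{lm:2agents}: on a heavy edge oriented $v\to u$ (outgoing at $v$, incoming at $u$), any exiting agent leaves toward the incoming side $m(u)$, never back toward $m(v)$. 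Hence $m(v)$ starts with exactly one agent and is never replenished.

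Next I would show that each of the three meta-edges at $m(v)$ requires an agent to \emph{depart} from $m(v)$. For a light edge this is immediate from the first part of Lemma~\ref{lm:2agents}, which needs one agent entering from each endpoint. For an outgoing heavy edge $v\to u$, whether it is covered by a single agent (Lemma~\ref{lm:1agent}) or by two agents (the second part of Lemma~\ref{lm:2agents}), an agent must enter from the outgoing side, i.e.\ from $m(v)$; indeed, a single agent entering only from $m(u)$ would, by Lemma~\ref{lm:1agent}, finish at $m(v)$ and thereby orient the edge as incoming at $v$, contradicting our assumption. Combining the two observations yields the contradiction: all three edges demand a departure from $m(v)$, each such departure is permanent since no agent ever returns (first step), yet $m(v)$ can supply at most one departure as it owns a single agent and receives none. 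Three departures are needed and at most one is available, so the exploration cannot cover all three edges; therefore every vertex of $H$ has at least one incoming edge.

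I expect the delicate point to be the first step --- certifying that an agent entering an outgoing heavy edge cannot drift back to $m(v)$. This rests on reading the orientation off Lemmas~\ref{lm:1agent} and~\ref{lm:2agents} correctly (exiting agents always go to the incoming endpoint) and, beneath that, on the global energy accounting of Lemma~\ref{lm:heavy}: the total surplus is exactly $2bn$ and is spent in full by the $n$ heavy edges, so no edge retains any slack for the extra double-covering that a return excursion would require.
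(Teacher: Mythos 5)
Your proposal is correct and follows essentially the same route as the paper: the paper's proof is the one-line observation that light and outgoing heavy meta-edges each consume an agent from $m(v)$, so a vertex with no incoming edge would need $3$ agents while starting with only $1$ and receiving none. Your write-up merely expands the same counting argument, carefully justifying (via Lemmas~\ref{lm:1agent} and~\ref{lm:2agents}) that departed agents never return to $m(v)$, which the paper leaves implicit in the word ``consume.''
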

\begin{proof}
As both light and outgoing edges consume agents, if $v$ had no incoming edge, it would need $3$ agents.  Since each vertex starts with $1$ agent, it is not possible for $v$ to have $3$ agents without an agent coming via an incoming edge.
\qed
\end{proof}

Because the number of heavy edges is exactly $n$, the heavy edges form a vertex-disjoint (vertex) cycle cover of $H$, i.e., each meta-vertex has one incoming, one outgoing and one light edge.

The problem is that the disjoint cycle cover is solvable in polynomial time. Hence, we need to modify the input so that there is a solution to the exploration problem if and only if the heavy edges form a single cycle of length $n$.

This is done by modifying the input into $I$ as follows:
\begin{itemize}
\item the three edges incident to the initial vertex have weights adjusted to $a+\epsilon n$ for some small $\epsilon < b/3n$,
\item the energy at the initial vertex is $3a+5b+(2n)\epsilon$
\item the energy at all other meta-vertices is $3a+5b+\epsilon$
\end{itemize}

\begin{lemma}\label{lm:if}
If the graph $G$ is Hamiltonian then there is an exploration solution to the modified input $I$.
\end{lemma}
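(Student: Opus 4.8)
The plan is to exhibit an explicit exploration strategy built directly from a Hamiltonian cycle of $G$. I would fix a Hamiltonian cycle and orient it as $v_0 \to v_1 \to \cdots \to v_{n-1} \to v_0$, where $v_0$ is the initial (modified) vertex. Since $G$ is $3$-regular, at every meta-vertex $m(v_i)$ exactly two of the three incident meta-edges lie on the cycle (one incoming, one outgoing) and the third is an off-cycle edge. I would declare every on-cycle meta-edge \emph{heavy} and every off-cycle meta-edge \emph{light}; there are then exactly $n$ heavy edges and $n/2$ light edges, matching the count forced by Lemma~\ref{lm:heavy}.

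Next I would assign trajectories case by case, using the coverings of Figure~\ref{fig:edgeGatget}. Each light meta-edge $m(x,y)$ is explored as in Figure~\ref{fig:edgeGatget}c): the agents initially on its two sides each advance to the central vertex $w$, share energy there, and then each covers one of the two whiskers and stops, so the edge is covered at cost $2a+2b$ while consuming both agents (the efficient case of Lemma~\ref{lm:2agents}). Each heavy meta-edge is explored by a single agent that traverses it in the cycle direction while double-covering both whiskers, as in Figure~\ref{fig:edgeGatget}b), at cost $2a+4b$ (Lemma~\ref{lm:1agent}); this agent exits into the next meta-vertex. I would then verify the agent bookkeeping: each meta-vertex begins with one agent and gains one along its incoming heavy edge, and it spends these by sending one agent out along its outgoing heavy edge and one into its light edge. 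The balance is therefore conserved everywhere, and the $n$ agents are exactly absorbed by the $n/2$ light edges.

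The crux is the energy accounting together with a realizable schedule. First I would check the global budget: the surplus energy of the modified instance must suffice for the $n$ heavy double-coverings once the extra weight placed on the three edges incident to $v_0$ is paid for by the extra energy deposited at $v_0$ together with the per-vertex $\epsilon$ surpluses at the remaining meta-vertices. The key structural point is that, because the heavy edges form a \emph{single} cycle through $v_0$, surplus energy can be transported all the way around to $v_0$ where the heavier edges sit; this is precisely the feature that fails for a cycle cover with more than one cycle, since a light edge absorbs its agents and transmits no energy across itself. Concretely, I would route the accumulated surplus forward along the oriented cycle, so that each traversing agent carries enough to complete its heavy edge and to hand the running surplus on to the next vertex, finally delivering the collected surplus to $v_0$.

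Finally I would produce an explicit valid schedule realizing these transfers, which I expect to be the main obstacle. The transfers at light-edge centers require the two endpoint agents to be co-located at $w$ simultaneously, and the transport of surplus around the cycle requires a globally consistent ordering of meetings and movements. I plan to organize this as a single wave that begins at $v_0$ and sweeps once around the cycle: an agent traverses each heavy edge, meets the resident agent at the next meta-vertex, performs the prescribed transfer (retaining what it needs, forwarding the surplus, and releasing one agent into the incident light edge), and continues. One must argue that no agent ever attempts to move with nonpositive energy and that every scheduled transfer occurs at a genuine meeting point on a common timeline; the single-cycle topology is exactly what guarantees that all these constraints can be satisfied at once, yielding a valid set of trajectories for $I$.
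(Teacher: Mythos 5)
Your proposal is correct and follows essentially the same route as the paper: orient the Hamiltonian cycle, make its meta-edges heavy (each crossed once by a traversing agent at cost $2a+4b$ while carrying the accumulated surplus forward) and the off-cycle meta-edges light (each covered by the two resident agents meeting at $w$), with the single-cycle structure delivering the collected $\epsilon$ surpluses back to $v_0$ to pay for its heavier incident edges. The only thing left to make this a finished proof is the explicit arithmetic the paper carries out — the explorer collects $2a+4b+\epsilon$ per meta-vertex, spends $2a+4b$ per heavy meta-edge, and returns to $v_0$ with exactly $a+b+\epsilon n$, precisely the cost of the remaining half of $v_0$'s light meta-edge.
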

\begin{proof}
Select the direction of the Hamiltonian cycle, its edges will be the heavy meta-edges.
The {\em explorer} agent starting in the initial vertex takes all the  energy available there and follows the Hamiltonian cycle. It collects $2a+4b+\epsilon$ energy in each of the meta-vertices it crosses, while spending $2a+4b$ on each heavy meta-edge. The agents located at other meta-vertices wake-up when the explorer arrives, take $a+b$ energy and explore half of the incident light meta-edge.

Note that when the explorer reaches $i$-th meta-vertex (not counting the initiator), it has $a+b+\epsilon n+i-1)$ energy remaining, except when it returns to the initiator, when it has only $a+b+\epsilon n$ energy as the last meta-edge it crossed has an extra $\epsilon n$ cost. This is just sufficient to cover half of the incident light meta-edge, which is the last part not yet covered.
\qed
\end{proof}

\begin{theorem}\label{th:onlyIf}
The exploration problem is NP-hard for 3-regular graphs. 
\end{theorem}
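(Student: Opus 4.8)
The plan is to complete the reduction from the Hamiltonian cycle problem on $3$-regular graphs (which is NP-complete) via the instance $I$ already constructed from $G$. The construction of $I$ is clearly polynomial, and Lemma~\ref{lm:if} already supplies the forward implication: if $G$ is Hamiltonian then $I$ is explorable. Hence it remains to prove the converse, that a valid exploration of $I$ forces $G$ to contain a Hamiltonian cycle. Together with the $3$-regularity of $G$ (and the fact that the meta-edge gadget keeps the produced exploration instance within the intended class), this will establish Theorem~\ref{th:onlyIf}.

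First I would extract the combinatorial skeleton of an arbitrary feasible solution. By Lemma~\ref{lm:heavy} there are exactly $n$ heavy meta-edges, and by Lemma~\ref{lm:incoming} together with the agent-counting argument already given, these heavy edges form a directed cycle cover of $H$ in which every meta-vertex has exactly one incoming heavy edge, one outgoing heavy edge, and one light edge. The objective is then to show that this cover consists of a \emph{single} cycle through all $n$ vertices: such a cycle projects, by the construction of the meta-edges and using $a>5nb$ (which forbids any $a$-edge from being double-crossed), to a Hamiltonian cycle of $G$.

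The heart of the argument is an energy-confinement claim: no usable energy can cross between two distinct cycles of the cover. By the light case of Lemma~\ref{lm:2agents}, a light meta-edge is explored with both participating agents terminating at the tips of the two whiskers, so neither can carry energy back out across the edge; and by Lemma~\ref{lm:1agent} and the second part of Lemma~\ref{lm:2agents}, a heavy meta-edge is crossed in a single direction, so energy only ever moves forward along a cycle and never leaves it. Consequently each cycle must be explored using only the energy initially stored at its own vertices. Now all three perturbed edges are incident to $v_0$ and hence belong to the cycle $C_0$ containing $v_0$, so exploring them imposes on $C_0$ a fixed extra toll of order $n\epsilon$. The only surplus available to pay this toll is the special surplus at $v_0$ together with the $\epsilon$ surplus contributed by each \emph{other} vertex of $C_0$; the choice $\epsilon<b/3n$ and the initial energies are calibrated so that this surplus suffices precisely when $C_0$ visits all $n$ vertices and falls short otherwise. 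Feasibility therefore forces $|C_0|=n$, i.e.\ $C_0$ is Hamiltonian, and so is $G$.

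I expect the main obstacle to be making the energy-confinement claim fully rigorous. One must rule out every indirect subsidy by which a short cycle $C_0$ could acquire the $\Theta(n\epsilon)$ it is missing: asymmetric energy sharing at a light-edge midpoint, an agent shuttling partway into an edge and returning, or any double-covering not already excluded by the earlier lemmas. The lever is the razor-thin global balance, namely that the total energy exceeds the total edge weight by only $2nb-\epsilon$ while each of the $n$ heavy edges already consumes an excess of exactly $2b$, so essentially no energy may be wasted anywhere. Tightening this global inequality together with the per-gadget characterizations of Lemmas~\ref{lm:1agent} and~\ref{lm:2agents} to show that the cycle decomposition is genuinely energy-impermeable, and then verifying that the $\epsilon$-toll is payable exactly at cycle length $n$ and not below, is the delicate part of the proof.
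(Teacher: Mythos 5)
Your proposal follows essentially the same route as the paper: the cycle-cover structure from Lemmas~\ref{lm:heavy} and~\ref{lm:incoming}, followed by the $\epsilon$-accounting showing that the perturbed edges at the initial vertex can only be paid for if the cycle through it collects the $\epsilon$ surplus from all $n$ meta-vertices, i.e.\ is Hamiltonian. The paper phrases this as an energy ledger for the single agent returning to the starting vertex rather than for the whole cycle $C_0$, and, like you, it leaves the energy-confinement step largely implicit, but the argument is the same.
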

\begin{proof}
Suppose there is an exploration solution for $I$. We claim that then the graph $G$ is Hamiltonian. Note that since the sum of all  $\epsilon$ is less than $b$, Lemma~\ref{lm:heavy} and Lemma~\ref{lm:incoming} still hold. Hence, the only way meta-edges incident to the starting vertex can be covered is if the agent returning to the starting vertex carries $a+b+\epsilon n$ energy. As the agent can gain only $\epsilon$ energy in each vertex it crosses (the remainder is used-up on crossing the heavy meta-edge and covering half of the incident light meta-edge), it needs to visit all $n$ meta-vertices in order to collect sufficient energy, i.e., it has performed Hamiltonian cycle. The theorem then follows since the Hamiltonian cycle problem for 3-regular graphs is known to be NP-complete.
\qed
\end{proof}

\subsection{An Approximation Algorithm for General Graphs}

Even though the graph exploration problem is NP-hard, it is still possible to obtain an efficient approximation algorithm for exploring arbitrary graphs that has an energy-competitive ratio at most $2$. Specifically, the algorithm uses at most twice as much energy as the cumulative sum of the edge weights of the graph.
First we state without proof a well-known result for agents on a cycle for which the reader is referred to \cite{lovasz2014combinatorial}[Paragraph 3, Problem 21].

%which is an immediate consequence of the folklore ``Gas Station problem'' for a circular route.

\begin{lemma}
\label{lm:cycle}
For any cycle and any initial positions of the agents there is an algorithm which explores the cycle if and only if  the given  sum of the energies of the agents is not less than the length of the cycle.
\end{lemma}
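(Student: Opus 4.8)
The plan is to prove the two directions separately: necessity is immediate from conservation of energy, while sufficiency rests on the classical cyclic ``rotation'' (gas-station) argument together with an explicit forward-flowing schedule of energy transfers. Let $L$ denote the circumference of the cycle. For necessity, observe that energy sharing only redistributes energy among the agents and never creates any, so the total distance travelled by all agents together is at most $\sum_{i=1}^{k} e_i$; since exploration requires every point of the cycle to be traversed, the union of the trajectories has total length at least $L$, and therefore $\sum_{i=1}^{k} e_i \ge L$.

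For sufficiency, suppose $\sum_{i=1}^{k} e_i \ge L$; by letting agents decline to use surplus energy I may assume $\sum_{i=1}^{k} e_i = L$. I would fix a clockwise orientation, label the agents $r_1,\dots,r_k$ in clockwise order at positions $p_1,\dots,p_k$, let $d_i$ be the length of the clockwise arc from $p_i$ to $p_{i+1}$ (indices taken mod $k$), and set $s_i = e_i - d_i$, so that $\sum_{i=1}^{k} d_i = L$ and $\sum_{i=1}^{k} s_i = 0$. The combinatorial heart of the argument is the cyclic rotation lemma (this is exactly the gas-station problem of \cite{lovasz2014combinatorial}): since the $s_i$ sum to zero, there is a cyclic starting index --- after relabelling, say $r_1$ --- for which every prefix sum $\sigma_m = \sum_{i=1}^{m} s_i$ is nonnegative for $m=1,\dots,k$.

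With such a starting agent fixed, I would build the schedule as a one-directional relay. Agent $r_1$ moves clockwise; as $\sigma_1 = s_1 \ge 0$ it has $e_1 \ge d_1$, so it covers arc $d_1$, arrives at $p_2$ with surplus $\sigma_1$, and transfers it to $r_2$, which has been waiting there. Inductively, when $r_m$ receives the surplus $\sigma_{m-1}$ at $p_m$ it holds energy $e_m + \sigma_{m-1} = \sigma_m + d_m \ge d_m$, hence it can cover arc $d_m$ and forward the surplus $\sigma_m$ to $r_{m+1}$; after $r_k$ covers $d_k$ the arcs $d_1,\dots,d_k$ tile the entire cycle and the residual surplus $\sigma_k$ equals $0$. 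Every transfer happens exactly at a meeting point $p_m$ (where $r_m$ waits for $r_{m-1}$), so the schedules match and lay out consistently on a single timeline, yielding a valid set of trajectories.

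I expect the main obstacle to be the correct invocation of the rotation lemma and the verification that the chosen starting agent makes energy flow strictly forward: the nonnegativity of \emph{every} prefix sum $\sigma_m$ is precisely what guarantees both that the starting agent $r_1$ already has enough energy to reach its clockwise neighbour and that no later agent is ever asked to cover its arc before enough energy has reached it. Once this combinatorial fact is secured, the remaining scheduling is routine --- the only point to check is that each energy transfer takes place at the physical meeting point of two consecutive agents and is feasible in time, which holds because $r_m$ simply waits at $p_m$ until $r_{m-1}$ arrives.
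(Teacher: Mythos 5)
Your proof is correct. The paper itself gives no proof of this lemma---it explicitly states it ``without proof'' and defers to the cited exercise in Lov\'asz (the classical gas-station/car-race problem)---and your argument is precisely the canonical one behind that citation, adapted to the energy-sharing relay: necessity by conservation of energy, and sufficiency by the cyclic rotation lemma guaranteeing a starting index with all prefix sums $\sigma_m\ge 0$, which is exactly what makes the forward relay of energy at the meeting points $p_m$ feasible. The only point worth making explicit (and you do handle it implicitly) is that waiting is permitted in the paper's model, so the sequential timeline of transfers is realizable.
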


Lemma~\ref{lm:cycle} has some important consequences. Recall that a graph $G$ is Eulerian if it is connected and all its vertices have even degrees.

\begin{theorem}
\label{thm:euler}
For any Eulerian graph and any initial positions of the agents there is an algorithm which explores the graph provided that the sum of the energies of the agents at the start is not less than the sum of the edges of the given graph. Moreover, the algorithm has optimal energy consumption.
\end{theorem}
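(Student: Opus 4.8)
The plan is to reduce the Eulerian graph exploration problem to the cycle exploration problem of Lemma~\ref{lm:cycle}. Since $G$ is Eulerian it admits an Eulerian circuit $W$, a closed walk that traverses every edge of $G$ exactly once; its total length equals $\mathcal{E}$, the sum of all edge weights. First I would ``unfold'' $W$ into an abstract cycle $C$ of circumference $\mathcal{E}$: each edge of $G$ appears exactly once as an arc of $C$, and each vertex $v$ of $G$ appears as one or more points of $C$ (once for every time $W$ passes through $v$). This gives a natural quotient map $\pi : C \to G$ that sends each arc isometrically onto the corresponding edge and collapses the several copies of a vertex to that vertex; in particular $\pi$ is continuous and length-preserving.

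Next I would transport the initial configuration onto $C$: for each agent located at a vertex $v$ of $G$, place a corresponding agent at any one of the points of $\pi^{-1}(v)$, keeping its energy. The total energy on $C$ is unchanged and, by hypothesis, is at least $\mathcal{E} = |C|$. By Lemma~\ref{lm:cycle} there is an algorithm that explores $C$, producing valid trajectories and an energy-transfer schedule on $C$. I would then push this schedule forward through $\pi$ to obtain trajectories and transfers in $G$. Because each arc of $C$ covers a distinct edge of $G$, covering all of $C$ covers every edge of $G$; because $\pi$ is length-preserving, energy consumption is identical; and because two agents meet on $C$ only at a common point of $C$, whose image under $\pi$ is a single point of $G$, every energy transfer scheduled on $C$ takes place between co-located agents in $G$ and is therefore valid. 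Hence the pushed-forward schedule explores $G$.

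Finally, for optimality I would observe that every edge of weight $w$ must be traversed at least once, so the total energy expended by any exploration is at least $\sum_e w_e = \mathcal{E}$; this is a lower bound independent of the algorithm. The cycle-exploration algorithm supplied by Lemma~\ref{lm:cycle} can be carried out with total energy consumption exactly $|C| = \mathcal{E}$ (each arc of $C$, and hence each edge of $G$, is swept once), matching the lower bound, so the resulting algorithm has optimal energy consumption.

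The main obstacle I anticipate is not the existence of the Eulerian circuit but the verification that the cycle schedule is faithfully realizable in $G$ after collapsing the repeated vertex copies --- specifically, confirming that ``meeting on $C$'' always implies ``co-location in $G$'' so that the transfers remain legal, and that the images of the agents' walks remain continuous walks in $G$. This is exactly where the quotient-map framing does the work, since it makes both the length-preservation and the co-location claims immediate.
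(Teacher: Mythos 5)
Your proposal is correct and follows essentially the same route as the paper: take an Eulerian circuit, view it as a cycle of length equal to the total edge weight, place the agents on it, and invoke Lemma~\ref{lm:cycle}. The extra care you take with the quotient map and the validity of transfers after collapsing vertex copies is sound but not a different argument; the paper leaves these details (and the explicit optimality lower bound) implicit.
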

\begin{proof}
Assume we have agents in such a graph so that the sum of the energies of the agents is at least equal to the sum of the length of the edges of the graph. Since the graph is Eulerian we can construct a cycle which traverses all the edges of the graph exactly once.  By Lemma~\ref{lm:cycle} there is an algorithm which assigns trajectories to the given sequence of agents and covers the entire graph. This proves Theorem~\ref{thm:euler}.
\qed
\end{proof}

\begin{theorem}
\label{thm:graph}
Any graph can be explored  by energy-sharing agents if the sum of their initial energies is at least twice the %total 
sum of edge weights. Moreover, this constant $2$ cannot be improved even for trees. 
\end{theorem}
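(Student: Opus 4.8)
The plan is to prove the two halves separately. Write $W=\sum_{e\in E}w_e$ for the total edge weight, and assume $G$ is connected (otherwise exploration is impossible in general, regardless of energy). For the upper bound I would reduce directly to the Eulerian case already settled in Theorem~\ref{thm:euler}. Form the multigraph $H$ obtained from $G$ by duplicating every edge. Every vertex degree is thereby doubled, so $H$ is connected with all degrees even, i.e.\ Eulerian, and its total weight is exactly $2W$. Since the agents occupy the same vertex set in $H$ as in $G$ and their total energy is at least $2W$, Theorem~\ref{thm:euler} (via the Euler-tour reduction to Lemma~\ref{lm:cycle}) produces a valid set of trajectories exploring all of $H$. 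As each edge of $G$ corresponds to a pair of parallel edges in $H$, covering $H$ in particular covers every edge of $G$, which establishes the sufficiency of energy $2W$.

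The only point that needs a short argument is the faithful back-translation of the tour solution to $G$, and this is where I expect the minor subtlety to lie. Fixing an Euler tour of $H$ and viewing it as an abstract cycle $C$ of length $2W$, the exploration supplied by Lemma~\ref{lm:cycle} traverses every arc of $C$ and schedules each energy transfer at a meeting point on $C$. I would check that each such meeting point on $C$ maps to a genuine physical co-location in $G$: two agents coincide on $C$ precisely when they are on the same arc at the same offset, i.e.\ on the same physical edge of $G$ at the same distance from an endpoint, so every scheduled transfer is realizable in $G$. Likewise, an agent placed at one occurrence of a vertex of $H$ loses no mobility in $G$, since the arcs incident to that occurrence are edges of $G$ incident to that vertex. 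Hence the cycle solution lifts to a valid exploration of $G$; all remaining verification is the elementary degree-and-weight bookkeeping above, and in fact this lifting is already internal to the proof of Theorem~\ref{thm:euler}.

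For the lower bound I would exhibit, for every $\varepsilon>0$, a tree on which exploration forces energy strictly more than $(2-\varepsilon)W$. Take the star $S_m$ with center $c$ and $m$ leaves, each joined to $c$ by an edge of weight $1/m$, so that $W=1$, and place a single agent at $c$; with one agent energy sharing is vacuous, so the task reduces to an ordinary covering walk. Because the agent must pass back through $c$ between visits to distinct leaves, at most one of the $m$ edges can be traversed only once, while the remaining $m-1$ edges are each traversed at least twice, forcing total travel at least $\bigl(2(m-1)+1\bigr)/m = 2 - 1/m$. Thus the required energy is $(2-1/m)W$, and choosing $m>1/\varepsilon$ makes this exceed $(2-\varepsilon)W$. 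Consequently no constant below $2$ can guarantee exploration even on trees, matching the upper bound and proving that the constant $2$ is optimal. The main work of the theorem is therefore concentrated in selecting this star family, which pushes the ratio all the way to $2$ in contrast with the $3/2$ bound for paths in Corollary~\ref{th:linesegment2}; the upper-bound direction is essentially a one-line reduction to Theorem~\ref{thm:euler}.
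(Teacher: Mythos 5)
Your proposal is correct and follows essentially the same route as the paper: the upper bound doubles every edge to obtain an Eulerian multigraph of weight $2W$ and invokes Theorem~\ref{thm:euler} (hence Lemma~\ref{lm:cycle}), and the lower bound is the same star construction (the paper uses $2k$ unit-weight leaves with a single mobile agent at the center needing $4k-1$ energy, which is your $2-1/m$ bound up to rescaling and the addition of immobile zero-energy leaf agents). Your explicit check that co-location on the abstract cycle implies physical co-location in $G$ is a detail the paper leaves implicit, but it is not a divergence in approach.
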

\begin{proof}
The original graph, say $G = (V,E)$, is not necessarily Eulerian. However, by doubling the edges of the graph we generate an Eulerian graph $G' = (V, E')$. The sum of the weights of the edges of $G'$ is equal to twice the sum of the weights of the edges of $G$. Theorem~\ref{thm:euler} now proves that the graph $G$ can be explored if the sum of their initial energies is at least twice the sum of edge weights in the graph.

%\hl{Proof that this is asymptotically optimal will now follow from $E_1$.}
Consider a star graph with $2k$ leaves, all edges are of weight 1, for a total weight of $2k$. Assume that we have $k$ agents in the leaves of the star, and one agent in the center of the star. 
Agents in the leaves have energy $0$, and the agent in the center
$4k-1$. 
To explore the graph  the center agent can traverse $2k-1$ edges of the star  twice and one edge once.  It is easy to see that no other strategy can do it with less energy.
%Another possibility is that  the  center agent  ``rescues'' some leaf agents and give them some energy to help it in the exploration.  
%However, to rescue an agent from a leaf results in an edge traversed three times, which does not result in any overall  energy saving. 
Thus the energy 
of the middle agent cannot be any lower, and asymptotically, 
total energy in this instance equal to $4k-1$ approaches  the double of the cost of the edges. 
  \qed
\end{proof}

\begin{remark}
An  improvement of the competitive ratio $2$ of Theorem~\ref{thm:graph} is possible in specific cases by using a Chinese Postman Tour~\cite{chinese} (also known as Route Inspection Problem). Namely, in polynomial time we can compute the minimum sum of edges that have to be duplicated so as to make the graph Eulerian,  and this additional sum of energies is sufficient for the exploration.
%agents is more than this minimum sum of edges, then we do it in that amount.
\end{remark}

%\hl{JO: I felt the part below needed a clarification, so I made some changes to it, have a look at it,}\\
%We now will consider how the initial distribution of energy to agents  
%can influence the existence of an exploration strategy. 

Assume we have a fixed configuration $C$ of $k$ agents
$r_1,r_2,\ldots,r_k$ in a given graph $G$.
We say that energy assignment $E=e_1,e_2,\ldots,e_k$ to agents in
configuration C is {\em minimal} if exploration is possible with
energies in $E$, but impossible when the energy level of any one agent
is decreased. Let $|E|=\sum_{i=1}^k e_i$.
We now investigate how large  the ratio $|E_1|/|E_2|$  can be 
for two minimal assignments $E_1$ and $E_2$ of a given configuration $C$,

%\begin{claim}
%$E_1$ and $E_2$ are minimal. 
%\end{claim}

%\noindent 
%It is easy to see that assignment $E_2$ is minimal, since $2k$ is the lower bound on the energy needed for the exploration, and the exploration of the graph can be done by each agent traversing two edges of the star graph.
  
%Energy assignment $E_1$ is also minimal. 

\begin{claim}
For any configuration of agent $C$  and any two minimal energy assignments  
$E_1$ and $E_2$ the ratio  $|E_1|/|E_2|\leq 2$, and this is asymptotically optimal.  
\end{claim}
\begin{proof} See the appendix.
%We know  from Theorem \ref{thm:euler} that $|E_1|/|E_2|\leq 2$. 
%Consider a star graph with $2k$ leaves, all edges of weight 1.  Assume that we have configuration $C$ consisting of $k$ agents in the
%leaves of the star, and one agent in the center of the star, 
%and the following two energy assignments for $C$:\\
%$E_1$:  agents in the leaves have energy $0$, and the agent in the center
%$4k-1$ for a total $4k-1$, i.e.,  as in the proof of Theorem \ref{thm:graph}.\\
%$E_2$: agents in the leaves have energy $2$, and the agent in the center
%$0$ for a total $2k$.\\
%$E_1$ was shown to be minimal in the proof of Theorem \ref{thm:graph}, and
%$E_2$ is obviously minimal. 
%Clearly, $|E_1|/|E_2|=(4k-1)/2k$  which converges to $2$
%for large $k$. 
\qed
\end{proof}

\section{Conclusion}
\label{sec:conclusion}

 We studied graph exploration by a group of mobile agents which can share energy resources when they are co-located. We focused on the problem of deciding whether or not it is possible to find trajectories for a group of agents initially placed in arbitrary positions with initial energies so as to explore the given weighted graph. The problem was shown to be NP-hard for $3$-regular graphs  while for general graphs it is possible to obtain an efficient approximation algorithm that has an energy-competitive ratio at most $2$ (and this is shown to be asymptotically optimal). We also gave efficient algorithms for the decision problem for paths, trees, and Eulerian graphs.
 The problem considered is versatile and our study holds promising directions for additional research and interesting open problems remain by considering exploration with 1) optimal total energy consumption, 2) agents with limited battery capacity, 3) energy optimal placement of mobile agents, 4) time vs energy consumption tradeoffs for mobile agents with given speeds, 5) additional topologies, as well as 6) combinations of these.

%\hl{TODO}
%\begin{enumerate}
%\item
%What is the extra you pay with respect to the weight of the graph?
%\item
%What is the optimal and/or worst-case placement of the robots?
%\item
%For any graph we can do no worse than the star, namely given $k$ (\# of robots) what is the worst-case grah?
%\end{enumerate}

\bibliographystyle{plain}
\bibliography{refs,refs1}

\begin{thebibliography}{10}

\bibitem{AH00}
S.~Albers and M.~R. Henzinger.
\newblock Exploring unknown environments.
\newblock {\em SIAM J. Computing}, 29(4):1164--1188, 2000.

\bibitem{albers2002exploring}
S.~Albers, K.~Kursawe, and S.~Schuierer.
\newblock Exploring unknown environments with obstacles.
\newblock {\em Algorithmica}, 32(1):123--143, 2002.

\bibitem{alpern2003theory}
S.~Alpern and S.~Gal.
\newblock {\em The theory of search games and rendezvous}.
\newblock Springer, 2003.

\bibitem{baezayates1993searching}
R.~Baeza~Yates, J.~Culberson, and G.~Rawlins.
\newblock Searching in the plane.
\newblock {\em Information and Computation}, 106(2):234--252, 1993.

\bibitem{baeza1995parallel}
R.~Baeza-Yates and R.~Schott.
\newblock Parallel searching in the plane.
\newblock {\em Comp. Geom.}, 5(3):143--154, 1995.

\bibitem{bampas2017collaborative}
E.~Bampas, S.~Das, D.~Dereniowski, and C.~Karousatou.
\newblock Collaborative delivery by energy-sharing low-power mobile robots.
\newblock In {\em Proc. of ALGOSENSORS}, pages 1--12, 2017.

\bibitem{bartschi2017efficient}
A.~B{\"a}rtschi.
\newblock {\em Efficient Delivery with Mobile Agents}.
\newblock PhD thesis, ETH Zurich, 2017.

\bibitem{beck1964linear}
A.~Beck.
\newblock On the linear search problem.
\newblock {\em Israel Journal of Mathematics}, 2(4):221--228, 1964.

\bibitem{burgard2005coordinated}
W.~Burgard, M.~Moors, C.~Stachniss, and F.~E. Schneider.
\newblock Coordinated multi-robot exploration.
\newblock {\em IEEE Transactions on Robotics}, 21(3):376--386, 2005.

\bibitem{chalopin2010constructing}
J.~Chalopin, S.~Das, and A.~Kosowski.
\newblock Constructing a map of an anonymous graph: Applications of universal
  sequences.
\newblock In {\em Proc. of OPODIS}, pages 119--134. Springer, 2010.

\bibitem{chalopin2014data}
J.~Chalopin, R.~Jacob, M.~Mihal{\'a}k, and P.~Widmayer.
\newblock Data delivery by energy-constrained mobile agents on a line.
\newblock In {\em Proc. of ICALP}, pages 423--434. Springer, 2014.

\bibitem{czyzowicz2016communication}
J.~Czyzowicz, K.~Diks, J.~Moussi, and W.~Rytter.
\newblock Communication problems for mobile agents exchanging energy.
\newblock In {\em Proc. of SIROCCO}, pages 275--288, 2016.

\bibitem{czyzowicz2018broadcast}
J.~Czyzowicz, K.~Diks, J.~Moussi, and W.~Rytter.
\newblock Broadcast with energy-exchanging mobile agents distributed on a tree.
\newblock In {\em Proc. of SIROCCO}, pages 209--225, 2018.

\bibitem{czyzowicz2019energy}
J.~Czyzowicz, K.~Diks, J.~Moussi, and W.~Rytter.
\newblock Energy-optimal broadcast and exploration in a tree using mobile
  agents.
\newblock {\em Theoretical Computer Science}, 795:362--374, 2019.

\bibitem{jurek2019groupsearch}
J.~Czyzowicz, K.~Georgiou, R.~Killick, E.~Kranakis, D.~Krizanc, M.~Lafond,
  L.~Narayanan, J.~Opatrny, and S.M. Shende.
\newblock Energy consumption of group search on a line.
\newblock In {\em Proc. of ICALP}, pages 137:1--137:15, 2019.

\bibitem{deng1991learn}
X.~Deng, T.~Kameda, and C.~Papadimitriou.
\newblock How to learn an unknown environment.
\newblock In {\em Proc. of FOCS}, pages 298--303. IEEE Computer Society, 1991.

\bibitem{deng1999exploring}
X.~Deng and C.~H. Papadimitriou.
\newblock Exploring an unknown graph.
\newblock {\em J. of Graph Theory}, 32(3):265--297, 1999.

\bibitem{dereniowski2015fast}
D.~Dereniowski, Y.~Disser, A.~Kosowski, D.~Pajak, and P.~Uzna{\'n}ski.
\newblock Fast collaborative graph exploration.
\newblock {\em Information and Computation}, 243:37--49, 2015.

\bibitem{dynia2007robots}
Miroslaw Dynia, Jakub {\L}opusza{\'n}ski, and Christian Schindelhauer.
\newblock Why robots need maps.
\newblock In {\em International Colloquium on Structural Information and
  Communication Complexity}, pages 41--50. Springer, 2007.

\bibitem{fraigniaud2006collective}
P.~Fraigniaud, L.~Gasieniec, D.~R. Kowalski, and A.~Pelc.
\newblock Collective tree exploration.
\newblock {\em Networks: An International Journal}, 48(3):166--177, 2006.

\bibitem{kleinberg1994line}
J.~M. Kleinberg.
\newblock On-line search in a simple polygon.
\newblock In {\em Proc. of SODA}, pages 8--15, 1994.

\bibitem{koutsoupias1996searching}
E.~Koutsoupias, C.~Papadimitriou, and M.~Yannakakis.
\newblock Searching a fixed graph.
\newblock In {\em Proc. of ICALP}, pages 280--289. Springer, 1996.

\bibitem{chinese}
M.-K. Kwan.
\newblock Graphic programming using odd or even points.
\newblock {\em Acta Mathematica Sinica (MR 0162630. Translated in Chinese
  Mathematics 1: 273--277, 1962)}, 10:263--266, 1960.

\bibitem{lovasz2014combinatorial}
L.~Lov{\'a}sz.
\newblock {\em Combinatorial Problems and Exercises}.
\newblock Elsevier, 1979.

\bibitem{moussi2018data}
J.~Moussi.
\newblock {\em Data communication problems using mobile agents exchanging
  energy}.
\newblock PhD thesis, Universit{\'e} du Qu{\'e}bec en Outaouais, 2018.

\bibitem{panaite1999exploring}
P.~Panaite and A.~Pelc.
\newblock Exploring unknown undirected graphs.
\newblock {\em J. of Algorithms}, 33(2):281--295, 1999.

\end{thebibliography}
\newpage
\section{Appendix}

\begin{proof}{\bf of Lemma \ref{lemma:l1}}\\
First we show that the trajectories can be modified to satisfy part (i).
Let $t_i$ be a trajectory of agent $r_i$ which receives some energy from its neighbour at an endpoint different from $s_i$.
Assume first that in trajectory $t_i$ agent $r_i$ moves to the right of $s_i$ and it arrives at $b_i^r$ with $x_1$ being its remaining energy,  it receives there additional energy $x_2$ from $r_{i+1}$, and the remaining energy of $r_{i+1}$ at $b_i^r$ after the transfer is $x_3$.
Let $y=b_i^r-s_i$. There are two cases to consider.

Case 1: $x_2 \geq 2y$. Clearly, the energy available to $r_i$ at $b_i^r$ is  sufficient to move back to $s_i$.  Consider the situation after  $r_i$ moves
back to $s_i$  from $b_i^r$. At that point
the energy of $r_i$ is $x_1+x_2-y$. Consider modified trajectories $t_i'$ and $t_{i+1}'$ for $r_i$ and $r_{i+1}$ as follows:
Agent $r_i$  remains at $s_i$, and agent $r_{i+1}$ moves to $s_i$, and the energy transfer is done at $s_i$ by transferring to $r_i$ the energy $x_2 - 2y$.
Notice that $r_{i+1}$ has sufficient energy to reach $r_i$ and go back to $b_i^r$ since $x_2\geq 2y$, in case the trajectory $t_{i+1}$ of $r_{i+1}$ was
turning back at left.  Then the energy of $r_i$ at point $s_i$ after the energy transfer would be $x_1+y +x_2 - 2y=x_1+x_2-y$, and $r_{i+1}$ after returning to $b_i^r$ is left with energy $x_2+x_3-2y -(x_2 - 2y)=x_3$, i.e., both agents are exactly with the same energy as in $t_i$, $t_{i+1}$ at the same points and they can follow the rest of their previous trajectories.

 Case 2: $x_2 < 2y$. Let $c=b_i^r-(x_2/2)$.
Consider the situation when  $r_i$ reaches $c$ from $b_i^r$. At that point
the energy of $r_i$ is $x_1+x_2/2$. Consider modified trajectories $t_i'$ and $t_{i+1}'$ for $r_i$ and $r_{i+1}$ as follows:
Agent $r_i$ moves from $x_i$ to $c$, where it turns to the left. Agent $r_{i+1}$ moves to $c$. If the trajectory of $r_{i+1}$ ended at $b_i^r$ then now it ends at $c$, else it turns right to go back to $b_i^r$.
There is no energy transfer  done at $c$.  In the modified trajectory the energy of $r_1$ at the turning point $c$ is $x_1+x_2/2$, and $r_{i+1}$ even if it needs to return to $b_i^r$ is left with energy $x_2+x_3 -2(x_2/2)=x_3$, i.e., both agents are exactly with the same energy level or better as in $t_i$, $t_{i+1}$ at the same points, and they can follow the rest of their previous trajectories and possible energy transfers as before.\\
Clearly a symmetric modification can be done if $r_i$ first moves to the left
of $s_i$.
Therefore,  by repeated application of the above transformation, we can ensure that the trajectories covering the segment $[0,1]$  satisfy $(i)$ of the lemma while using at most the given energies.

We then process every trajectory not satisfying $(ii)$ of the lemma as follows.
Let $t_i$ be a trajectory that does not satisfy $(ii)$ of the theorem.
Assume that trajectory $t_i$ of agent $r_i$ first goes from point $s_i$ to point $b_i^r>s_i$ on the right of $s_i$, then to point $b_{i}^{\ell}<s_i$ on the left of $s_i$, and
$|s_ib_i^r| > |s_ib_{i}^{\ell}|$. Since $t_i$ satisfies $(i)$, agent $r_i$ receives no energy  at $b_i^r$.
 Then, clearly, replacing $t_i$ by $t_i'$ which first goes left to
$b_{i}^{\ell}$ and then to $b_i^r$ we have a trajectory that is shorter, since the doubly covered part is shorter than the singly covered part. If $r_i$  delivers
some energy to $r_{i-1}$ or $r_{i+1}$, it can deliver the same amounts when following $t_i'$. This change does not influence any other trajectory.
Clearly a symmetric modification can be done if $r_i$ first goes to the left
of $s_i$.
\qed
\end{proof}

\begin{proof}{\bf of Lemma~\ref{lm:linesegment}}\\
We give a proof by induction on $i$.
Trivially, the inductive hypothesis is true  for $i=0$, before any trajectory is assigned, since $\ell_1=0,tr_1=0$.

Assume now that the statement is true for some $i$, where  $0 \leq i\leq k-1 $. We will show that the statement is true for $i+1$. We consider the following exhaustive cases:\\
{\bf Case 1}: $tr_{i}>0$, there is a surplus of energy in the initial part.  Then the trajectories assigned so far explore the interval up to $s_i$.
Clearly, the unused energy $tr_i$ should be transferred to $r_i$.
By Lemma \ref{lemma:l1} the transfer of energy is done at $s_i$,
agent $r_i$ will have energy $e_i+tr_i$, and we may assign to $r_i$ a straight line trajectory $t_i$ from $s_i$ to the right until either $r_i$ runs out of energy, or it reaches $s_{i+1}$, i.e., to point $s_{i}+y=l_{l+1}$ where
$y=\min\{s_{i+1}-s_i,s_i+e_i+tr_i\}$, and this is the trajectory assigned by our algorithm.   Clearly, a straight line trajectory of $r_i$ is energy optimal, and following this trajectory  $r_i$ ends with
unused energy $tr_{i+1}=e_i+r_i-y\geq 0$. Thus trajectories
$t_1, t_2, \ldots,t_{i-1},t_i$ explore the segment $[0,l_{i+1}]$ using in total energy $(\sum_{j=1}^{i}e_j)-tr_{i+1}$ and this energy cannot be made smaller, Furthermore, if $tr_{i+1}>0$  then   $l_{i+1}=s_{i+1}$, and if  $tr_{i+1}=0$ then $s_{i}\leq l_{i+1}\leq s_{i+1}$, and the inductive hypothesis is  true for $i+1$. \\
{\bf Case 2}: $tr_i<0$, there is a deficit in the initial part. Then the trajectories assigned so far cover the interval up to $\ell_i=s_{i-1}$. By Lemma \ref{lemma:l1} the transfer of energy is done by $r_i$ at $s_{i-1}$.
The trajectory of $r_i$ depends on its energy level:\\
If $e_i < tr_i+(s_i-s_{i-1})$  then $r_i$ does not have enough  energy to cover the sub-segment from $s_i$ to
$s_{i-1}$ and to give $r_{i-1}$ the needed energy and thus it needs to receive energy from $r_{i+1}$. By Lemma \ref{lemma:l1} the trajectory $t_i$ of $r_{i}$ should only go from $s_{i}$ to the left until $s_{i-1}$, and deliver the required energy to $r_{i-1}$, with the  deficit at $s_i$ being  $tr_i-(s_i-s_{i-1})$.  A straight line trajectory of $r_i$ is energy optimal and it is the trajectory assigned by our algorithm. Thus, with addition of $t_i$ trajectories
$t_1, t_2, \ldots,t_{i-1},t_i$ cover the segment $[0,s_{i}]$ using in total energy $(\sum_{j=1}^{i}e_j)-tr_{i+1}$ and this energy cannot be made smaller. Furthermore, $l_{i+1}=s_{i}$, and the inductive hypothesis is  true for $i+1$. \\
If $e_i \geq tr_i+(s_i-s_{i-1})$  then $r_i$  has enough  energy to cover the sub-segment from $s_i$ to
$s_{i-1}$, to give $r_{i-1}$ energy $tr_i$, and it has $e_i - (tr_i+(s_i-s_{i-1}))$ energy left over. This energy must be used to cover a part of the segment to the right of $s_i$, but not past $s_{i+1}$. Since $r_i$ moves left and right of $s_i$, a part of its trajectory is covered twice. By Lemma \ref{lemma:l1},  the part of the segment covered doubly should not exceeded the length the segment covered  singly. Let $e_i'=e_i-|tr_i|$, i.e., it is the energy available for covering a segment around $s_i$.\\
Assume  $s_i-\ell_i \leq e_1' \leq 3 (s_i-\ell_i)$ and
% it is optimal to double on the right.
let $y = \min\{ \frac{e_i'-(s_i-\ell_i)}{2}, s_{i+1}-s_i \}$.  Consider a trajectory of  $r_i$ that goes right to  $\ell_{i+1}=s_i+y$ from $s_i$, then turns left and goes to $s_{i-1}$ where it gives energy $tr_i$ to $r_{i-1}$. In the traversal it spends energy $2y +(s_i-\ell_i)$. If $s_i+y=s_{i+1}$, then agent $r_i$ gives its excess energy $e_i-|tr_i|-(s_i-\ell_i) -2y$  to $r_{i+1}$  when at $s_{i+1}$.
Since $y\leq (e'-(s_i-\ell_i))/2\leq  (3(s_i-\ell_i)-(s_i-\ell_i)/2=(s_i-\ell_i)$, the length of the doubly covered part, is at most equal to  $s_i-\ell_i$, and thus the doubly covered part should be on the right.  This
 trajectory of $r_i$ covers the segment up to
$ s_i+y$ and with the available energy, it cannot cover more. This is the trajectory and energy transfers assigned by our algorithm.\\
 %Its unused energy, if any is $tr_{i+1}=e_i-|tr_i|-(s_i-\ell_i) -2y$.\\
If $ e_1' >3 (s_i-\ell_i) \mbox{ and } (s_i-\ell_i) \geq (s_{i+1}-s_{i}) $ then, clearly,  it is optimal to do double coverage on the right. Since the available energy is sufficient for it, trajectory of $r_i$ should go from $s_i$ all the way to $s_{i+1}$, there $r_i$ turns left and goes to $\ell_i$ to deliver energy $tr_i$ to $r_{i-1}$. When at $s_{i+1}$ it should give its surplus energy
$tr_{i+1}=e_i'-2(s_{i+1}-s_i)-(s_i-\ell_i)\leq e_i'$ to $r_{i+1}$.
 This  is the trajectory and energy transfers assigned by our algorithm. \\
%It is easy to see that the length of the segment covered twice is at most equal to the segment covered singly. \\
If $ e_1' >3 (s_i-\ell_i) \mbox{ and } (s_i-\ell_i) < (s_{i+1}-s_{i}) $ then
$r_i$ has sufficient energy to cover the sub-segment from $\ell_i$ to $s_{i+1}$. Since $(s_i-\ell_i) < (s_{i+1}-s_{i})$, it is optimal to do a double coverage on the left.  The trajectory of $r_i$ thus should go left from $s_i$ to
$\ell_i=s_{i-1}$ where it gives energy $tr_i$ to $r_{i-1}$,
then turns right and goes to $s_{i+1}$ to deliver its remaining energy $tr_{i+1}e_i' -2(s_i-\ell_i)-(s_{i+1}-s_i)$ to $s_{i+1}$. This  is the trajectory and energy transfers assigned by our algorithm. \\
%Since $ s_{i+1}-s_i> s_i-\ell_i$, the length of the segment covered twice is at most equal to the segment covered singly.
%We set $\ell_{i+1}= s_{i+1}$. \\
Thus in all these cases  trajectories
$t_1, t_2, \ldots,t_{i-1},t_i$ cover the segment $[0,l_{i+1}]$ using in total energy $(\sum_{j=1}^{i}e_j)-tr_{i+1}$ and this energy cannot be made smaller, Furthermore, if $tr_{i+1}>0$  then   $l_{i+1}=s_{i+1}$, and if  $tr_{i+1}=0$ then $s_{i}\leq l_{i+1}\leq s_{i+1}$, and the inductive hypothesis is  true for $i+1$. \\
{\bf Case 3}: $tr_i=0$, there is no deficit. This is similar to Case 2 and its sub-cases and thus omitted. \\
This completes the proof by induction.
\qed
\end{proof}
\begin{proof}{\bf of Lemma \ref{lm:Be}}\\
Before elaborating on the proof via a case analysis of possibilities, we first examine which scenarios for agents crossing the edge $e$ are energy efficient and need to be analyzed in detail, and which ones are wasteful and hence can be ignored.
\begin{figure}[htb]
\centerline{\includegraphics[width=8cm]{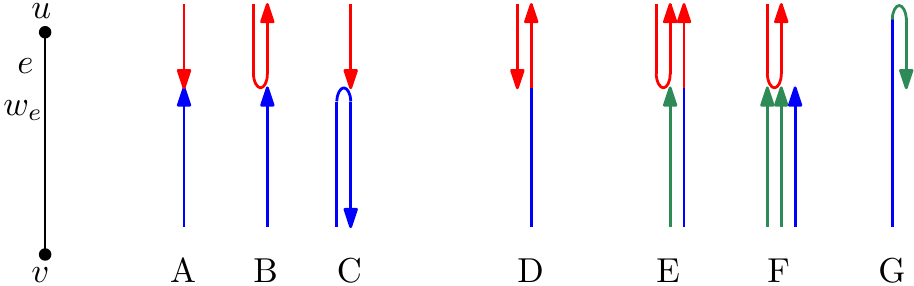}}
\caption{Optimizing scenarios with agents terminating inside edge $e$. The green arrows are the wasteful extra agents.}
\label{fig:edgeOptimization}
\end{figure}
As mentioned earlier, agents can collaborate to explore the edge, and this could mean that it is sometimes useful for an agent to expend all its remaining energy to explore the edge \textit{partially} starting from one endpoint of the edge, while leaving other agents to explore the rest of the edge. In particular, we are interested in justifying the cases when an agent really needs to terminate inside the edge.

Consider the cases in Figure~\ref{fig:edgeOptimization} where each blue/green arrow represents an agent exploring a part of the edge $e=(u,v)$ from $v$, while each red arrow represents an agent coming from $u$ to explore the rest of the edge. We claim that cases A, B and C from Figure~\ref{fig:edgeOptimization} are the only scenarios that make sense from the point of energy efficiency. Indeed, case D can be seen as the same as case B, with the agents switching their roles in their meeting point.  Hence, if an agent crosses the whole edge, having an agent traveling in the same direction that terminates inside $e$  (the green arrow in case E) serves no purpose at all: it only increases the net energy expenditure and hence the case can be ignored. Similarly, it is wasteful to have two or more agents going in the same direction and terminating inside of $e$, as the same effect (namely, covering that part of $e$, bringing energy to agents coming from the opposite side) can be achieved by a single agent as in case F. Finally, once an edge $e$ has been fully crossed (and hence explored), there is no point in turning back and terminating inside $e$ as that serves no purpose (case G). As this applies for both directions in $e$, the result is that in each direction at most one agent terminates inside $e$ and if it does so, there are no agents crossing $e$ in this direction. Hence, \textit{cases A, B and C are indeed all the scenarios with agents terminating inside $e$ that need to be considered}.

%The proof is done by case analysis. As there are no unhandled branches in \HandleEdge, we know that no case has been omitted.

%Handling an edge is the more interesting case, with all possible cases shown in Figure~\ref{fig:handleEdge}.

We are now in a position to describe all the possibilities that are relevant to calculating $B[e,*]$ for any edge $e=(u,v)$, namely those in Figure~\ref{fig:handleEdge}. Blue arrows represent surplus energy available from subtree $T_v$ (i.e., with $B[v,*] \geq 0$) while red arrows represent an energy deficit that has to be provided from $u$. The possibilities shown in the figure belong to four groups characterized by the value of $i$ in $B[v,i]$: group 1 where $i<0$ (the example $i=-2$ is shown), group 2 where $i=0$, group 3 where $i=1$ and group 4 for $i>1$ (the example $i=2$ is shown). Within each group, there are three basic cases based on $B[v,i] \leq 0$ (case a); $B[v,i] >0$ and with enough surplus energy to allow the $i$ agents to traverse the edge (case c); and $B[v,i] > 0$ but without enough energy to avoid having to terminate agents after partially exploring the edge (sub-cases b, b' etc.). These latter sub-cases are more nuanced for $i=0$ and $i=1$, which is the reason for having separate groups for them.
\begin{figure}[htb]
\centerline{\includegraphics[width=5in]{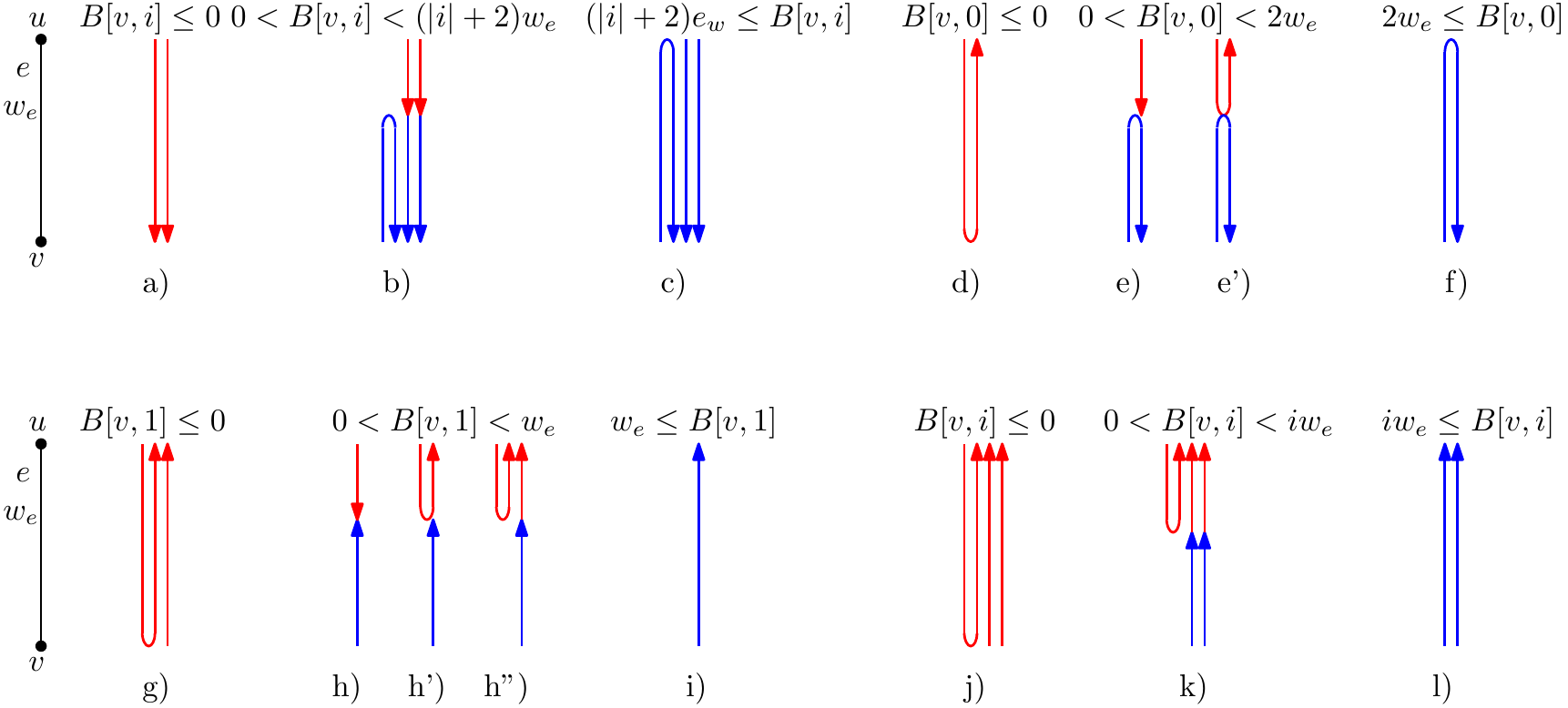}}
\caption{Handling an edge $e=(u,v)$, based on $B[v,i]$. The four groups are characterized by the value of $i$ in $B[v,i]$: group 1 where $i<0$ (the example $i=-2$ is shown), group 2 where $i=0$, group 3 where $i=1$ and group 4 for $i>1$ (the example $i=2$ is shown).}
\label{fig:handleEdge}
\end{figure}

%This means that when computing $B[e,i]$ for $i\notin \{-1,0,1\}$, it is sufficient to consider only $B[v,i]$.

Indeed, the assignment of $B[e,*]$ values in \HandleEdge ~closely tracks the sub-cases in Figure~\ref{fig:handleEdge}. Consider first the scenarios where no agent terminates inside the edge $e$. This means $B[e,i] = B[v,i]-x$ where $x$ is the cost incurred on the edge $e$. Let us do the case analysis to derive $x$:

\begin{itemize}
\item if $i<0$ and $B[v,i] \leq 0$, both $i$ agents and $B[v,i]$ energy must reach $v$ from $u$ at a cost of $x=|i|w_e$ for those $|i|$ agents to cross $e$ (see case 1a from Figure~\ref{fig:handleEdge} and line~\ref{ln:a} of \HandleEdge)

\item if $i<0$ and $0<B[v,i]$: $i$ agents must cross from $u$ to $v$. While the minimal cost of this crossing is achieved by $|i|$ agents crossing from $u$ to $v$ at a cost of $|i|w_e$, they would need to take energy from $u$, effectively wasting all of $B[v,i]$. Hence, the best strategy is to make the optimal use of the energy available in $v$, even if that means spending more in exploring $e$:
\begin{itemize}
\item If $B[v,i]>= (2+|i|)w_e$, there is enough energy in $v$ to allow an agent to bring this excess to $u$, pay $|i|w_e$ for the transit of $|i|$ agents from $u$ to $v$, and $2w_e$ for its own costs (as it also returns to $v$ to maintain the balance of $|i|$ agents). Hence, $x=(2+|i|)w_e$ (see cases 1c and 2c from Figure~\ref{fig:handleEdge} and line~\ref{ln:c} of \HandleEdge).

\item If $0<B[v,i]<(2+|i|)w_e$, there is not enough energy in $v$ to do all of this, and extra energy from $u$ is needed. As the total energy spent on $e$ is $|i|w_e+2y$ where $y$ is the distance traveled by the agent from $v$, it is minimized when $y$ is minimized. In turn, $y$ is minimized when $B[v,i]=(2+|i|y)$ so that when it meets the agents from $u$ it has just enough energy to bring them (and itself) to $v$ (otherwise some of the energy in $v$ would have been wasted). In such case, the extra energy brought from $u$ is equal to $|i|(w_e-y)$ (see case 1b from Figure~\ref{fig:handleEdge} and line~\ref{ln:b} of \HandleEdge).
\end{itemize}

\item if $i\geq0$ and $B[v,i] \leq 0$, an agent has to come from $u$ to $v$ to bring the energy so that $i$ agents can cross from $v$ to $u$, with the overall cost of $x=(2+i)w_e$ (see cases 2a, 3a and 4a from Figure~\ref{fig:handleEdge} and line~\ref{ln:d} of \HandleEdge).

\item if $i>0$ and $0 \leq B[v,i]$: $i$ agents must cross from $v$ to its parent.
\begin{itemize}
\item If $\geq iw_e$, they have enough energy to do so without help from the parent, incurring cost $x=iw_e$ (see cases 3c and 4c from Figure~\ref{fig:handleEdge} and line~\ref{ln:i} of \HandleEdge).

\item If $0 < B[v,i] < iw_e$, extra energy from $u$ is needed to help bring the $i$ agents all the way to $u$. Let $y$ is the distance traveled from $u$ by the agent bringing this extra energy. The optimal cost of $iw_e+2y$ is achieved by minimizing $y$, however $y\geq (w_e=B[v,i]/i)$ in order to reach all the agents that need energy. (see cases 3b'' and 4b from Figure~\ref{fig:handleEdge} and line~\ref{ln:k} of \HandleEdge).
\end{itemize}

\item if $i=0$, as the $e$ must be explored, the cost $x$ on it cannot be $iw_e=0$ but it must be $2w_e$ (still assuming no agent terminates inside $e$). The cases $B[v,i]<0$ and $B[v,i]>2w_e$ have already been dealt with, which leaves case 2b' from Figure~\ref{fig:handleEdge} and line~\ref{ln:e'} of \HandleEdge.
\end{itemize}

We are left with analyzing the cases when an agent (or two) terminates inside $e$. As shown before, the only such scenarios that are possibly optimal are cases A, B and C from Figure~\ref{fig:edgeOptimization}. These correspond to cases 3b, 3b' and 2b from Figure~\ref{fig:handleEdge} and lines~\ref{ln:h}, \ref{ln:h'} and \ref{ln:e} of \HandleEdge, respectively. It is straightforward to verify that the values computed in those lines indeed correspond to the energy supplied from $u$. Remember that $\gets$ assigns only if the new value is better (higher) than the previous one. Hence, $B[e,-1]$ at the end gets the best value from case 1b of $B[v,-1]$, case 2b of $B[v,0]$ and case 3b of $B[v,1]$ (provided the values of $B[v,*]$ with those cases exist) while $B[e,0]$ gets the better of case 2b' of $B[v,0]$  and case 3b' of $B[v,1]$.
\qed
\end{proof}

%In order to complete the induction, we now prove Lemma~\ref{lm:Bv}:
%\begin{lemma}\label{lm:Bv}
%Let $v$ be a non-leaf vertex $v$  with two child edges $e'$ and $e''$. If $B[e',*]$ and $B[e'',*]$ have been correctly computed, then procedure \HandleVertex correctly computes $B[v,*]$ where $*$ stands for all relevant values of $i$.
%\end{lemma}
\begin{proof}{\bf of Lemma~\ref{lm:Bv}}\\
As no energy and agents are lost at $v$, the optimal energy for the balance of $i$ agents at $v$ is obtained by considering all possibilities how to combine agents from the subtrees to obtain $i$ agents and taking the best option (line~\ref{ln:combine} of \HandleVertex).
%The subsequent while loop reflects the fact that if there are excess agents, some of them might remain in $v$ at no extra cost {\color{red} (this might not be necessary if we prove that $B[e,i]$ is non-increasing for $i>0$ -- I am not $100\%$ sure that really holds)}.

It remains to be shown that the schedule with balance $B[v,i]$ is indeed feasible (it might not be because a lack of agents or energy at the right time and place). Let $i'$ and $i''$ be the agent balances that result in the best $B[v,i]$. W.l.o.g. assume $i'\geq i''$. If $i'>0$ and $B[v,i']>0$, this means that $T_{e'}$ can be explored without any contribution from $v$ or its parent (by induction hypothesis), bringing the excess agents and energy to $v$, which has now exactly the right amount of agents and energy (with possible contribution from its parent) to complete the exploration of $T_{e''}$. This works even if $i'$ or $B[e', i']$ are not positive, as long as the contribution from the parent allows to explore $T_{e'}$ or $T_{e''}$.

However, there is one case where this does not work: If $i$ and $B[v,i]$ are positive, $i'$ and $B[e'', i'']$ are positive but $B[e', i']$ and $i''$ are negative. In such case there is no help coming from the parent of $v$, and neither subtree can be explored first on its own in a naive manner. However, the fact that $B[e'', i'']$ is positive means that there is an agent in $T_{e''}$ with excess energy. That agent first comes out to $v$ and brings energy to explore the $T_{e'}$ and then returns (together with the agents it brings from $T_{e'}$) to explore $T_{e''}$. This does not cost extra energy beyond what is already accounted for in $B[e'', i'']$ -- this is captured in case 1c from Figure~\ref{fig:handleEdge} and line~\ref{ln:c} of of \HandleEdge.
\qed
\end{proof}
\begin{claim}
For any configuration of agent $C$  and any two minimal energy assignments  
$E_1$ and $E_2$ the ratio  $|E_1|/|E_2|\leq 2$, and this is asymptotically optimal.  
\end{claim}
\begin{proof}
We know  from Theorem \ref{thm:euler} that $|E_1|/|E_2|\leq 2$. 
Consider a star graph with $2k$ leaves, all edges of weight 1.  Assume that we have configuration $C$ consisting of $k$ agents in the
leaves of the star, and one agent in the center of the star, 
and the following two energy assignments for $C$:\\
$E_1$:  agents in the leaves have energy $0$, and the agent in the center
$4k-1$ for a total $4k-1$, i.e.,  as in the proof of Theorem \ref{thm:graph}.\\
$E_2$: agents in the leaves have energy $2$, and the agent in the center
$0$ for a total $2k$.\\
$E_1$ was shown to be minimal in the proof of Theorem \ref{thm:graph}, and
$E_2$ is obviously minimal. 
Clearly, $|E_1|/|E_2|=(4k-1)/2k$  which converges to $2$
for large $k$. 
\qed
\end{proof}

\end{document}